\newtheorem{example}{Example}  
\newtheorem{theorem}{Theorem}
\newtheorem{lemma}[theorem]{Lemma}
\newtheorem{corollary}[theorem]{Corollary}
\newtheorem{definition}[theorem]{Definition}
\newcommand{\R}{\mathbb R}
\newcommand{\N}{\mathbb N}
\def\calM{\mathcal{M}}
\DeclareMathOperator{\new}{new}
\DeclareMathOperator{\opt}{opt}
\begin{document}
\title{Adaptive Drift Analysis\thanks{This work was begun while both authors were visiting the ``Centre de Recerca Matem\'atica de Catalunya''. It profited greatly from this ideal environment for collaboration. A preliminary announcement of the result (without proofs) appeared in~\cite{PPSNthis}.
The work described in this paper was partly supported by EPSRC Research Grant
(refs\ EP/I011528/1) ``Computational Counting''
}}
\author{Benjamin Doerr\\
Max Planck Institute for Computer Science,\\
Campus E1 4\\
66123 Saarbr\"ucken, Germany
 \and 
Leslie Ann Goldberg\\
Department of Computer Science\\
University of Liverpool\\
Ashton Bldg, Liverpool L69 3BX, UK
}

\newcommand\eps{\varepsilon}
\maketitle

\begin{abstract}
We show that, for any $c>0$, the (1+1)~evolutionary algorithm using an arbitrary
mutation rate $p_n = c/n$  finds the optimum of a
linear objective function over bit~strings of length~$n$ in expected time $\Theta(n \log n)$. 
Previously, this was only known for $c\leq 1$.
Since previous work also
shows that universal
drift functions cannot exist for $c$ larger than a certain constant,
we instead define drift functions  which depend 
crucially on the relevant objective functions (and also on~$c$ itself). 
Using these carefully-constructed drift functions, we
prove that the expected optimisation time is $\Theta(n \log n)$.
By giving an alternative proof of the multiplicative drift theorem, we also show that our
optimisation-time bound holds with high probability.  \end{abstract}

\section{Introduction}

Drift analysis is central to the field of evolutionary algorithms.
This type of analysis was implicit in the work of Droste, Jansen and Wegener~\cite{DJW02},
who analysed 
the optimisation of linear functions over bit~strings by the classical (1+1) evolutionary algorithm ((1+1)~EA)
with mutation rate $p_n =1/n$.
The method was made explicit in the work of He and Yao who gave a simple, clean analysis.
Later fundamental applications of drift analysis in the theory of evolutionary computation include~\cite{GielL06,  GielWegenerSTACS03,  HappJKN08, NeumannOW09, OlivetoW11}.

Recent work by Johannsen, Winzen and the first author~\cite{DoerrJW10cec,DoerrJW10} shows that  drift analysis, as it is currently  used,    relies strongly on the 
fact that the mutation probabilities~$p_n$ are relatively small. 
As He and Yao observed~\cite{HeY02},
the  analysis in~\cite{HeY01}  only applies if the mutation 
probability $p_n$ is  strictly smaller than $1/n$, where $n$ is the length of the bit~strings of the search space. 

This  restriction was improved in~\cite{HeY04}, where a family of drift functions was presented 
that works for the most common  mutation probability $p_n = 1/n$.\footnote{Note, though, that in that paper an EA only accepting strict improvements was analysed; this fact was exploited in the proof. We have little doubt, though, that their proof can be adapted to work also for the more common setting that also an offspring with equal fitness is accepted.}
However,
as Doerr et al. have observed~\cite{DoerrJW10},   
this family of drift functions still ceases to work for $p_n \geq 4/n$.    Furthermore~\cite{DoerrJW10}, if 
$p_n > 4/n$, then for \emph{any} universal family of drift functions  (from the class of log-of-linear functions) 
there is a linear objective function~$f$,  and a search space element~$x$, 
such that the drift  from~$x$ is negative (so the proof that the (1+1) EA converges quickly does not go through).
Doerr et al. have also shown~\cite{DoerrJW10cec} that
this problem \emph{cannot} be fixed by applying the averaging approach of J\"agersk\"upper~\cite{Jagerskupper11} ---
that approach  fails for $p_n \ge 7/n$.
Thus, prior to the work presented here, it was an open problem whether the (1+1) EA 
minimises linear objective functions over bit~strings in $O(n \log n)$ time when the
mutation probability is $p_n = c/n$ for $c\geq 7$.

Our main result shows that this is the case.
Since it is known that no universal family of drift functions exists, we instead
manage to define a feasible family of drift functions in such a way that
the drift function~$\Phi_f$ depends crucially on the objective function~$f$.
Using this idea,  we show (see Theorem~\ref{thm:main}) that, for any constant~$c$,
    the (1+1) EA with  
 mutation probability $p_n=c/n$ optimises any
 family of linear objective functions over bit~strings in expected time $O(n \log n)$. 
  A corresponding lower bound follows easily from standard arguments, see  Theorem~\ref{thm:lower}.
  Thus, our result is as good as possible (up to a constant factor).
  
  By reproving a multiplicative drift theorem (which was first used to analyse evolutionary algorithms in~\cite{DoerrJW10}), we also  show that
 our bound on the optimisation time
 holds with high probability.
 The tail bounds in our drift theorem can also be used to show that many other  known bounds on optimisation 
 times also hold with high probability. This has been done for the (1+1) EA finding minimum spanning trees, computing shortest paths or Eulerian cycles in~\cite{PPSNtail}.

\section{Drift Analysis}

In this section, we give a brief description of drift analysis, which is sufficient for our purposes. For a more general
background to drift analysis, we refer to the papers cited above. 
 
\subsection{The (1+1) evolutionary algorithm}\label{subsecea}

Let~$F$ be a set of \emph{objective functions}. Each $f\in F$ is associated with a \emph{problem size} $n(f)\in \N$
and is a function from the \emph{search space} $\Omega_{f}$ 
to $ \R^{\geq 0}$.
Given $f$, 
the goal is to find an element $x\in \Omega_{ f}$ such that $f(x)$ is \emph{minimised}.
Our assumption that the optimisation problem is \emph{minimisation} (as opposed to maximisation) is without loss of generality, as is our assumption that the range of each objective function contains only non-negative numbers. For each objective function~$f$, let $\Omega_{\opt,f} \subseteq \Omega_{f}$ denote the set of \emph{optimal} search points --- that is, those that minimise the value of $f$.

\begin{definition}
\label{def:bitstrings}
We say that $F$ is a family of objective functions 
\emph{over bit~strings}
if, for every $ f\in F$,  
$\Omega_{f} = \{0,1\}^{n(f)}$. 
In this case, an element $x\in \Omega_f$
is a string of $n(f)$ bits, $x = x_{n(f)} \ldots x_1$. 
\end{definition}

\begin{definition}
\label{def:lin}
Suppose that $F$ is a family of objective functions over bit~strings.
We say that $F$ is \emph{linear}
if   each  $f\in F$  
is of the form $f(x) =  \sum_{i=1}^{n(f)} a_{i} x_i$,
where the coefficients $a_{i}$  are real numbers.
Without loss of generality, we assume that $a_{i+1}\geq a_{i}>0$ for all $i\in\{1,\ldots,n(f)-1\}$.
\end{definition}
 
\begin{example}
\label{example:binval}
Suppose, for $n\in \N$, that $f_n:\{0,1\}\rightarrow \R^{\geq 0}$ is defined
by $f_n(x_n \ldots x_1) = \sum_{i=1}^n 2^{i-1} x_i$. Then $F = \{f_n\}$ is a linear family of
objective functions over bit~strings. The value of $f_n(x)$ is the binary value of the
bit~string $x = x_n \ldots x_1$.
\end{example} 

\begin{example}
\label{example:onemax}
Suppose, for $n\in \N$, that $f_n:\{0,1\}\rightarrow \R^{\geq 0}$ is defined
by $f_n(x_n \ldots x_1) = \sum_{i=1}^n x_i$. Then $F = \{f_n\}$ is a linear family of
objective functions over bit~strings. The value of $f_n(x)$ is the number of ones in  the
bit~string $x = x_n \ldots x_1$.\end{example}
 
The randomised search heuristic that we study is the well-known (1+1) EA. 
To emphasize the role of the parameters, we refer to this algorithm as the \emph{(1+1) EA for minimising $F$}.
Given an objective function~$f\in F$,
this algorithm starts with an initial solution $x$, chosen uniformly at random from the search space $\Omega_{f}$. 
In each iteration, from its existing solution $x$, it generates a new solution $x'$ by \emph{mutation}. 

\begin{definition} Suppose $F$ is a family of objective functions over bit~strings
and that $p_n\in[0,1]$ for $n\in \N$.
In \emph{independent bit mutation}, each bit $x_i$ of~$x$ is flipped independently with probability $p_n$.
In other words, for each $i \in \{1,\ldots,n\}$ independently, we have $\Pr(x'_i = 1 - x_i) = p_n$ and $\Pr(x'_i = x_i) = 1-p_n$.
Often, $p_n=1/n$, but we do not make this assumption.
\end{definition}

In the subsequent \emph{selection} step, if $f(x') \le f(x)$, the EA \emph{accepts} the solution~$x'$, meaning that the  next iteration starts with 
$x_{\new} := x'$. Otherwise, the next iteration starts with $x_{\new} := x$. Since we are interested in determining  
the number of iterations that are necessary to find an optimal solution, we do  not specify a termination criterion here. A pseudo-code description of the (1+1)~EA is given in Algorithm \ref{alg:oneoneea}

\renewcommand{\algorithmicloop}{\textbf{repeat forever}}
\begin{algorithm}[h]
    \caption{The (1+1) EA for minimising $F$ over  bit~strings with independent bit mutation}
    \algsetup{indent=1.5em}
    \begin{algorithmic}[1]
    \STATE Input an objective function $f\in F$ .
        \STATE \textbf{Initialization:} Choose $x \in \{0, 1\}^{n(f)}$ uniformly at random.
        \LOOP
            \STATE Create $x' \in \{0, 1\}^{n(f)}$ by copying $x$.
            \STATE \textbf{Mutation:} Flip each bit in~$x'$ independently with probability $p_{n(f)}$.
            \STATE \textbf{Selection:} \textbf{if} $f(x') \le f(x)$ \textbf{then} $x := x'$.
        \ENDLOOP
    \end{algorithmic}
    \label{alg:oneoneea}
\end{algorithm}

 Note that the (1+1) EA is not typically used to solve
 difficult optimisation problems in practice. 
 There are other, more complex, search heuristics which are better for
 such problems in practice.  However, understanding the optimisation behaviour 
 of the (1+1) EA 
 often helps us to predict the optimisation behaviour   of more complicated EAs 
 (which are mostly too complex to allow rigorous theoretical analysis). As such, the (1+1) EA proved to be an important tool that attracted significant research efforts (see, e.g.,~\cite{Baeck93, DJW98, DJW02} for some early works).
  
\subsection{A simple drift theorem with tail bounds}
\label{sec:driftdef}

The \emph{optimisation time} of the (1+1) EA for minimising $F$
is  defined to be the   
number of times that  the objective function is evaluated before
the optimum is found.
This is (apart from an additive deviation of one) equal to the number of mutation-selection iterations.
Suppose that $c$ is a positive constant and that $F$ is a family of linear objective functions over bit strings.
Our main result (Theorem~\ref{thm:main})  shows that the (1+1) EA for minimising $F$ with 
independent bit-mutation rate $p_n = c/n$ 
has expected optimisation time $O(n(f) \log n(f))$.
It also shows that, with high probability, the optimisation time is of this order of magnitude.

In order to prove the main result,  we introduce the notion of \emph{piece-wise polynomial drift}. This will
be explained in Section~\ref{subsecpiecedrift}. In this section, we prepare the groundwork, by introducing the basic
drift theorems that we will need. We start by defining the notion of a \emph{feasible} family of drift functions.
When feasible families of drift functions exist, they allow an elegant analysis yielding upper bounds for the 
optimisation time of EAs.

\begin{definition}
\label{def:feasdrift}
 Let $\nu : \N \to \R^{\geq 0}$ be monotonically increasing and consider a family  $F$ of objective functions.
  For each $f\in F$, let
  $\Phi_f$ be a function from~$\Omega_{f}$ to~$\R^{\geq 0}$.
  We say that $\Phi=\{\Phi_f\}$ is a   
  $\nu$-feasible family of drift functions for
  a (1+1) EA for minimising $F$,
  if  there is an $n_0\in \N$ such that, for every $f\in F$ with $n(f)\geq n_0$, the
  following conditions are satisfied.
       \begin{enumerate}
        \item $\Phi_f(x) = 0$ for all $x \in \Omega_{\opt,f}$;
        \item $\Phi_f(x) \ge 1$ for all $x \in \Omega_{f} \setminus \Omega_{\opt,f}$;
        \item 
   for all $x \in \Omega_{f} \setminus \Omega_{\opt,f}$, \[E[\Phi_f(x_{\new})] \le \left(1 - \frac {1}{\nu(n(f))}\right) 
   \Phi_f(x),\] where, as above, we denote by $x_{\new}$ the solution resulting from executing a single iteration (consisting of mutation and selection) with initial solution~$x$.
\end{enumerate}
\end{definition}
 
Here is a simple example.
\begin{example} 
\label{example:simple}
Fix a positive constant~$c$.
Let $F$ be a linear family of objective functions over bit~strings
and consider the (1+1) EA for minimising $F$ which uses independent bit mutation with 
$p_n  = c/n$. Suppose that, for each $f\in F$,  
the coefficient $a_{1}$ is at least~$1$.
Then the trivial family $\Phi$ with $\Phi_f = f$  is an $(n/c')$-feasible family of drift functions for  this EA, where $c' := c (1- (c/n))^{n-1} \approx c e^{-c}$. 
However, as we shall see, this not often a very useful family of drift functions.
\end{example}

The following well-known theorem  (Theorem~\ref{thm:classical}, below) 
shows how the optimisation time can be bounded using a drift function. 
Similar arguments appear in the context of coupling proofs. See, for
example, \cite[Section~5]{DyerGreenhill}.
Much more is known about drift analysis. See, for
example~\cite{Hajek82}. 
Note that Theorem~\ref{thm:classical} gives a probability tail bound
in addition to an upper bound on the expected optimisation time.
The tail bound is not new, but it seems to be unknown in the
evolutionary
algorithms literature. It can be applied to improve several previous
results (see~\cite{PPSNtail}).

\begin{theorem}
\label{thm:classical} 
Consider a family $F$ of objective functions
and a $\nu$-feasible family $\Phi$ of drift functions for
a (1+1) EA for minimising $F$.
Let $\Phi_{\max,f}$ denote $\max\{\Phi_f(x) \mid x \in \Omega_{f}\}$.  
Then there is an $n_1\in \N$ such that, for every $f\in F$ with $n(f)\geq n_1$,
the expected optimisation time 
of the EA
is at most 
$$ {\nu(n(f))}  ( \ln \Phi_{\max,f}+1).$$ 
Also, for any $\lambda >0$,  the probability that the optimisation time exceeds 
$$\lceil {\nu(n(f))}  (\ln \Phi_{\max,f} + \lambda  ) \rceil$$  is  at most $ \exp(-\lambda ))$.
\end{theorem}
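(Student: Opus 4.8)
Fix $f\in F$ with $n:=n(f)$ large (I will take $n_1:=n_0$, the constant from Definition~\ref{def:feasdrift}), write $\nu:=\nu(n)$, $q:=1-1/\nu$ and $\Phi_{\max}:=\Phi_{\max,f}$, let $x_0,x_1,\dots$ be the sequence of search points produced by the EA, and let $T$ be the optimisation time, i.e.\ the least $t$ with $x_t\in\Omega_{\opt,f}$. The plan rests on two easy preliminary facts. First, $\Omega_{\opt,f}$ is absorbing: once $x_t$ is optimal, $f(x_t)$ is minimal, so any accepted offspring has the same minimal $f$-value and is again optimal, whence $x_{t'}\in\Omega_{\opt,f}$ for all $t'\ge t$. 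Second, by conditions~1 and~2 of Definition~\ref{def:feasdrift}, $\Phi_f$ takes values only in $\{0\}\cup[1,\infty)$ and vanishes precisely on $\Omega_{\opt,f}$; in particular $\{T>t\}=\{\Phi_f(x_t)\ge 1\}$ and $\Phi_{\max}\in\{0\}\cup[1,\infty)$, so we may assume $\Phi_{\max}\ge 1$ (otherwise every point is optimal and $T=0$).

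For the tail bound I would first iterate the drift condition. Condition~3 together with the absorbing property gives $E[\Phi_f(x_{t+1})\mid x_t]\le q\,\Phi_f(x_t)$ for every $t$ (it is condition~3 when $x_t$ is non-optimal, and reads $0\le 0$ otherwise), hence by induction $E[\Phi_f(x_t)]\le q^t E[\Phi_f(x_0)]\le q^t\Phi_{\max}$. Markov's inequality and the second preliminary fact then give $\Pr[T>t]=\Pr[\Phi_f(x_t)\ge 1]\le E[\Phi_f(x_t)]\le q^t\Phi_{\max}\le e^{-t/\nu}\Phi_{\max}$, and substituting $t=\lceil\nu(\ln\Phi_{\max}+\lambda)\rceil$ makes the right-hand side at most $e^{-\lambda}$. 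That is the claimed tail bound, and it is the easy half.

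For the bound on $E[T]$ the plan is to pass to a logarithmic potential. I would set $h(0):=0$ and $h(s):=1+\ln s$ for $s\ge 1$, and observe that $h$ is the restriction to $\{0\}\cup[1,\infty)$ of the increasing, concave function $\tilde h$ on $[0,\infty)$ with $\tilde h(x)=x$ for $x\le 1$ and $\tilde h(x)=1+\ln x$ for $x\ge 1$ (so $\tilde h(x)\le x$ throughout). The heart of the argument is a one-step estimate: for every non-optimal $x$,
\[ E\!\left[h(\Phi_f(x_{\new}))\right]\ \le\ h(\Phi_f(x))-\tfrac1\nu. \]
Granting this, one takes total expectations and, using absorption to kill the contribution of $\{T\le t\}$, obtains $E[h(\Phi_f(x_{t+1}))]\le E[h(\Phi_f(x_t))]-\tfrac1\nu\Pr[T>t]$; telescoping from $0$ to $M-1$, using $h\ge 0$ and $h(\Phi_f(x_0))\le h(\Phi_{\max})=1+\ln\Phi_{\max}$, and letting $M\to\infty$ yields $E[T]=\sum_{t\ge0}\Pr[T>t]\le\nu(\ln\Phi_{\max}+1)$. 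To prove the one-step estimate, note $\Phi_f(x_{\new})\in\{0\}\cup[1,\infty)$, so $h(\Phi_f(x_{\new}))=\tilde h(\Phi_f(x_{\new}))$; Jensen (concavity of $\tilde h$) and condition~3 give $E[\tilde h(\Phi_f(x_{\new}))]\le\tilde h(q\,\Phi_f(x))$, and writing $s:=\Phi_f(x)\ge 1$ an elementary case split finishes it: if $qs\ge1$ then $\tilde h(qs)=1+\ln s+\ln q\le 1+\ln s-1/\nu$, while if $qs<1$ then $\tilde h(qs)=qs$ and $qs\le 1+\ln s-1/\nu$ reduces to $q(s-1)\le\ln s$, which holds on $1\le s<1/q$ because $\ln s-q(s-1)$ is zero at $s=1$ and increasing there.

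The one-step estimate for $h$ is the only genuinely delicate point, and I expect it to be the main obstacle: the ``multiplicative'' drift of $\Phi_f$ turns into an additive drift of $1/\nu$ for $h$ precisely because $\Phi_f$ skips the open interval $(0,1)$ --- this is exactly what lets us replace $h$ by its concave extension $\tilde h$ and invoke Jensen --- and one must separately handle the regime $q\,\Phi_f(x)<1$, where the potential undershoots into the linear part of $\tilde h$ and the inequality $q(s-1)\le\ln s$ is needed. Everything else (the tail bound, the telescoping, the limit) is routine. I note that if one only wants the right order of magnitude, $E[T]=O(\nu\log\Phi_{\max})$ follows immediately by summing the tail bounds $\Pr[T>t]\le e^{-t/\nu}\Phi_{\max}$; recovering the stated constant $\nu(\ln\Phi_{\max,f}+1)$ seems to require the logarithmic transform.
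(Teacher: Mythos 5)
Your proposal is correct. The tail-bound half is exactly the paper's argument (iterate condition~3 to get $E[\Phi_f(x_t)]\le(1-1/\nu)^t\Phi_{\max,f}\le e^{-t/\nu}\Phi_{\max,f}$, then apply Markov), but your derivation of the expectation bound takes a genuinely different route. The paper stays with the potential $\Phi_f$ itself: it writes $E[T]=\sum_{t\ge0}\Pr(\Phi_{[t]}>0)\le T^*+\sum_{t\ge T^*}E[\Phi_{[t]}]$, truncates at $T^*=\lceil\nu\ln\Phi_{\max,f}\rceil$, evaluates the geometric tail as $\nu$ (which is why it must first arrange $\nu(n)>1$ via the choice of $n_1$ and dispose of the degenerate cases $\nu\le1$ separately), and then recovers the exact constant $+1$ by a small convexity computation in the fractional part $\eps$ of the ceiling. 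You instead pass to the logarithmic potential $h$, use the fact that $\Phi_f$ avoids the interval $(0,1)$ to replace $h$ by its concave increasing extension $\tilde h$, apply Jensen to convert the multiplicative drift into the additive one-step estimate $E[h(\Phi_f(x_{\new}))]\le h(\Phi_f(x))-1/\nu$ (with the necessary case split at $q\,\Phi_f(x)<1$, which you handle correctly via $q(s-1)\le\ln s$ on $[1,1/q)$), and telescope. Your route buys a cleaner derivation of the same constant --- no ceiling bookkeeping, no separate treatment of $\nu\le1$ (condition~3 at a non-optimal point already forces $\nu\ge1$, so $q\in[0,1)$ and your argument degenerates gracefully) --- at the price of the one-step lemma, which as you say is the only delicate point; this is essentially the now-standard ``drift transform'' proof of the multiplicative drift theorem, whereas the paper's is the direct tail-summation proof. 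Both are valid and give identical conclusions.
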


\begin{proof} 

Let $n_0$ be the value from  Definition~\ref{def:feasdrift}.
Definition~\ref{def:feasdrift} rules out the possibility 
that $\max\{\nu(n) \mid n\geq n_0\}<1$.
Also, if $\max\{\nu(n) \mid n\geq n_0\}=1$ then, from part (3) of
the definition, $E[\Phi_f(x_{\new})]=0$ so the optimisation time is~$1$.
Suppose then, that there is an $n\in\N$ such that $\nu(n)>1$.
Let $n'_0$ be $\min\{n \in \N \mid \nu(n)>1\}$ (actually, it would suffice to
take $n'_0$ to be any member of this set, but, for concreteness, we take the minimum).
Let $n_1=\max(n_0,n'_0)$. Now consider any $f\in F$ with $n(f)\geq n_1$
and note that the first two conditions in  Definition~\ref{def:feasdrift} are satisfied.

Let $n=n(f)$.
Fix an arbitrary initial solution $x_0 \in \Omega_f$.
Consider  starting the EA with this initial solution~$x_0$ instead of choosing a random one.
Denote by $\Phi_{[t]}$ the value of $\Phi_f(x)$ after $t$ selection-mutation steps. 
Denote by $T_{\opt, x_0}$ the first time when the current solution $x$ is optimal.
Thus, from Definition~\ref{def:feasdrift}, $\Phi_{[T_{\opt,x_0}]}=0$, and
for $t<T_{\opt,x_0}$, we have $\Phi_{[t]}\geq 1$.
From the third condition in Definition~\ref{def:feasdrift},
$$E[\Phi_{[t]}] \leq {(1-1/\nu(n))}^t \Phi_{[0]} \leq {
(1-1/\nu(n))}^t 
\Phi_{\max,f} \le \exp(-t/\nu(n)) \Phi_{\max,f},$$ 
where, in the last estimate, we used the well-known inequality $1+z \le e^z$, which is valid for all $z \in \R$.

It is well known (see, for example~\cite[Problem 13(a), Section 3.11]{GS}) that
if $X$ is a random variable taking values in the non-negative integers, then $E[X] = \sum_{i = 1}^\infty \Pr(X \ge i)$.
Therefore,
the expected optimisation time $E[T_{\opt, x_0}]$ can be written as
$$E[T_{\opt,x_0}] = \sum_{i\geq 1} \Pr(T_{\opt,x_0}\geq i) = \sum_{t \ge 0} \Pr(\Phi_{[t]}>0).$$  
So, for any  non-negative integer~$T$,
$E[T_{\opt,x_0}] \leq  T+\sum_{t\ge T}\Pr(\Phi_{[t]}>0)$.
Since, by Markov's inequality, 
$\Pr(\Phi_{[t]} > 0) = \Pr(\Phi_{[t]} \ge 1) \le E[\Phi_{[t]}]$,  
$$E[T_{\opt,x_0}] \leq    T+\sum_{t\ge T} E[\Phi_{[t]}].$$

Now let $T=\lceil \ln(\Phi_{\max,f}) \nu(n)  \rceil = \ln(\Phi_{\max,f}) \nu(n)  +\eps$ for some $0 \le \eps < 1$. By our  upper bounds above, we obtain
$$ E[T_{\opt, x_0}] \le  T + (1- 1/\nu(n))^{T} \Phi_{\max,f} \sum\nolimits_{i = 0}^\infty (1 -  1/\nu(n))^i .$$
Since $\nu(n)>1$,  $\sum_{i = 0}^\infty (1 -  1/\nu(n))^i  = \nu(n)$.
Plugging this in with the definition of $T$ and using  
$(1- 1/\nu(n))^{\ln(\Phi_{\max,f}) \nu(n)} \leq
\exp^{- \ln(\Phi_{\max,f})} = 1/\Phi_{\max,f}$,
\begin{align*} E[T_{\opt, x_0}] &\le
\ln(\Phi_{\max,f}) \nu(n)  + \eps   + (1 -  1/\nu(n))^\eps  \nu(n)\\
&=
\nu(n) \left(
\ln(\Phi_{\max,f})   + \eps/\nu(n)   + (1 -  1/\nu(n))^\eps  \right).
\end{align*}
We can now check, for every $\eps \in [0,1]$,
that $\eps/\nu(n)   + (1 -  1/\nu(n))^\eps \leq 1$, as required.
This is easiest seen by checking it for $\eps = 0$ and $\eps = 1$ and noting that the term is convex in~$\eps$.
Finally,  let $T' := \lceil (\nu(n) )(\ln(\Phi_{\max,f}) + \lambda   )\rceil$ for $\lambda > 0$. We compute
\begin{align*}
        \Pr(T_{{\opt}, x_0} > T') & = \Pr(\Phi_{[T']} > 0) \le E[\Phi_{[T']}] \le \exp(-T'  /\nu(n)) \Phi_{\max,f} \le \exp(-\lambda ).
\end{align*} 
\end{proof}

The proof above uses the argument $E[\Phi_{[t]}] \le (1 - 1/\nu(n))^t \Phi_{\max,f}$. This had been used previously in the so-called \emph{methods of expected weight decrease}~\cite{NeumannW07}. There, however, it was followed up with a simple Markov inequality argument that led to a bound on the expected run-time that is weaker (by a constant factor) than what our drift theorem yields. Hence the main difference between the two approaches is that ours gives a better transformation of the drift of $E[\Phi_{[t]}]$ into a bound on $E[\min\{t \mid \Phi_{[t]} < 1\}]$. Note, just to avoid misunderstandings, that typically 
$E[\min\{t \mid \Phi_{[t]} < 1\}]$ and $\min\{t \mid E[\Phi_{[t]}] < 1\}$ are different quantities. 

Theorem~\ref{thm:classical}  indicates that a 
family of drift function is better if 
the maximum values  $\Phi_{\max,f}$ are small. 
In  Example~\ref{example:simple},  taking $\Phi_f = f$ only yields an 
upper bound $O(n(f) \log f_{\max})$ for the expected optimisation time,
where $f_{\max} = \max\{ f(x) \mid x \in \Omega_{f}\}$.
This can be a weak bound. 
For example,
applying it to the family~$F$ from Example~\ref{example:binval} yields
a bound
$O(n(f)^2)$ for the expected optimisation time (which, as we shall see,    is a weak bound).

\subsection{Drift analysis for linear objective functions over bit~strings}
\label{sect:return}

The main goal of this paper is to analyse the optimisation time of the (1+1) EA for minimising a linear family $F$
of objective functions over bit~strings, assuming independent-bit mutation with $p_n=c/n$ (for a fixed constant~$c$).
The reason for assuming $p_n=c/n$ is that results of Droste, Jansen and Wegener (Theorem 13 and 14 in~\cite{DJW02}) 
show that this is the optimal order of magnitude.
Since our objective is an $O(n(f) \log n(f))$ bound on optimisation time,
we ease the language with the following definition.

\begin{definition}
A \emph{feasible family of drift functions} is
a family of drift functions which is $\nu$-feasible
for a function $\nu(n) = O(n)$.
\end{definition}

Finding feasible drift functions is typically quite tricky.  
Doerr, Johannsen and Winzen 
built on earlier ideas of Droste, Jansen and Wegener~\cite{DJW02}
and He and Yao~\cite{HeY04} in order to show that,  
for any linear family $F$ of objective functions over bit~strings, 
the family $\Phi$ defined by
$$ \Phi_f(x)  = \sum_{i = 1}^{\lfloor n(f)/2\rfloor} x_i + \tfrac 54 \sum_{i = \lfloor n(f)/2 \rfloor+1}^{n(f)} x_i$$
is a feasible family of drift functions for
the (1+1) EA for minimising $F$ which uses independent bit mutation with $p_n=1/n$.
(Thus, this suffices for the case $c=1$.)

This family
$\Phi = \{\Phi_f\}$ is said to be a \emph{universal} family of feasible 
drift functions
because $\Phi_f$ depends on $n(f)$, but not otherwise on $f$.
Since $\Phi_{\max,f} = \Theta(n(f))$,  this gives an
expected optimisation time of $O(n(f) \log n(f))$, which is asymptotically optimal~\cite{DJW02}. 
Proving that this $\Phi$ is a feasible  family, while not trivial, is not overly complicated. 
This discovery of a universal family of feasible drift functions  gives an elegant analysis of the EA.

Unfortunately, even if we allow $\Phi_{\max,f}$ to grow faster than~$\Theta(n(f))$,
such universal families of feasible drift functions  only exist 
when $c$ is small (as noted in the introduction to this paper). 
For larger values of $c$, the function $\Phi_f$ has to depend upon~$f$.
Prior to this paper, no non-trivial drift functions of this form were known, so it  was an open problem whether
the $O(n(f) \log n(f))$ time bound also applies for $c>1$. We show that this is the case.

\subsection{Our result}

Our main theorem is as follows.

\begin{theorem}
\label{thm:main}
Let~$c$ be a positive constant.
Let~$F$ be a family of linear objective functions over bit strings.
The (1+1) EA for minimising $F$ with independent bit-mutation rate
$p_n = c/n$ 
has expected optimisation time $O(n(f) \log n(f))$.
There is a constant~$k$ and a function $\nu(n) =O(n)$
such that, for any $\lambda>0$,  the probability that the optimisation time
exceeds this bound by $k \nu(n) \lambda$ time steps is at most $k\exp(-\lambda)$.
\end{theorem}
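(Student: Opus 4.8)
The plan is to reduce Theorem~\ref{thm:main} to the construction of a \emph{feasible family of drift functions with a polynomially bounded maximum}, and then quote Theorem~\ref{thm:classical}. Concretely, for the given constant~$c$ and each linear $f\in F$ I would build a weight function $\Phi_f(x)=\sum_{i=1}^{n(f)} g_{f,i}\,x_i$ with positive, non-decreasing weights $1\le g_{f,1}\le\dots\le g_{f,n(f)}$ such that (i) $\Phi_{\max,f}=\sum_i g_{f,i}$ is bounded by a fixed polynomial in $n(f)$ (so $\ln\Phi_{\max,f}=O(\log n(f))$), and (ii) $\{\Phi_f\}$ is $\nu$-feasible for the $(1+1)$~EA with $p_n=c/n$ for some $\nu(n)=O(n)$. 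Given such a family, Theorem~\ref{thm:classical} immediately yields both parts of Theorem~\ref{thm:main}: its expectation bound is $\nu(n(f))(\ln\Phi_{\max,f}+1)=O(n(f)\log n(f))$, and its tail bound, with the constant~$k$ chosen to swallow the additive $\ln\Phi_{\max,f}=O(\log n(f))$ term and the ceilings, gives the stated high-probability claim (the multiplicative-drift theorem reproved elsewhere in the paper could be used in place of this tail bound to sharpen the statement).

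The work is entirely in the construction, since Example~\ref{example:simple} already shows that the trivial choice $\Phi_f=f$ is $\nu$-feasible with $\nu(n)=O(n)$ but may have $\Phi_{\max,f}$ exponential in $n(f)$. I would group the coordinates of~$f$ into \emph{magnitude scales} according to the size of the coefficients~$a_i$ (consecutive coordinates whose coefficients lie within a $c$-dependent constant factor of one another forming one scale), assign within each scale weights that grow like a \emph{polynomial} in the position inside the scale, of a degree and with a normalisation depending on~$c$, and rescale from one scale to the next by a bounded geometric factor. This is the ``piece-wise polynomial drift'' alluded to after Theorem~\ref{thm:classical}. Three requirements must be met simultaneously: within a wide scale of near-equal coefficients the polynomial growth must be steep enough that a single higher-weight coordinate flipped to~$0$ ``pays for'' the $O(c)$ same-scale lower-weight coordinates that one mutation can simultaneously flip to~$1$ while still being accepted (this generalises the $1,\tfrac54$ device of Doerr--Johannsen--Winzen, which only covers $c\le1$); across scales, the bounded geometric step together with the coefficient separation must make ``subsidy'' moves (a high-magnitude $1$-bit being cleared so that several lower-magnitude $0$-bits may be set) harmless; and the total weight must stay polynomial in~$n$, which forces the within-scale degree and the inter-scale factor to be only as large as~$c$ demands (and determines the threshold $n_0$ of Definition~\ref{def:feasdrift}).

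To verify condition~(3) of Definition~\ref{def:feasdrift} I would fix a non-optimal~$x$, condition on the mutation, and split $E[\Phi_f(x_{\new})-\Phi_f(x)]$ into three parts. Part~(a): mutations flipping exactly one $1$-bit and nothing else; each such move is accepted and decreases $\Phi_f$ by exactly the weight of that bit, so their total contribution is $-\tfrac{c}{n}(1-\tfrac cn)^{n-1}\Phi_f(x)$, a guaranteed gain of order $\Phi_f(x)/n$ with implied constant depending on~$c$. Part~(b): accepted mutations that flip at least one $0$-bit up; here I bound the expected weight \emph{gained} from up-flips from above, using the acceptance constraint $\sum_{i:\,0\to1}a_i\le\sum_{i:\,1\to0}a_i$ together with the scale structure and the polynomial weight growth, so that it is at most, say, half the gain in~(a). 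Part~(c): mutations flipping many bits, whose contribution is negligible because the probability of flipping $k$ bits decays like $(c/n)^k$ while $\Phi_{\max,f}$ is only polynomial. Summing, $E[\Phi_f(x_{\new})]\le\bigl(1-\Omega(1/n)\bigr)\Phi_f(x)$ with the implied constant depending on~$c$, i.e.\ $\nu(n)=O(n)$.

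The hard part will be part~(b): exhibiting one weight assignment for which the up-flip contribution is small \emph{uniformly} over all states~$x$ and all accepted mutation patterns, in particular over states in which many coordinates of comparable coefficient-magnitude are ``on'' and can be exchanged laterally on a plateau of~$f$ — precisely the configuration that defeats universal log-of-linear drift functions once $c$ exceeds a small constant. This is where the $c$-dependent parameters (scale width, polynomial degree, inter-scale factor, $n_0$) must be tuned together, and where the bulk of the estimation lies: controlling expected weight changes over random sub-collections of coefficients that are constrained to sum to at most a prescribed coefficient.
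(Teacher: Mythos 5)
Your high-level reduction is not the paper's, and the construction you sketch cannot deliver the property you need from it. You want a single drift function with $\Phi_{\max,f}$ polynomial in $n(f)$, obtained by letting the weights grow \emph{polynomially in the position} inside each scale of comparable coefficients and by a bounded geometric factor between scales. The first ingredient already fails for large~$c$: take a stretch of $\Theta(n)$ positions all with coefficient $1$ and consider the state whose only $1$-bit is the rightmost position $r$ of that stretch. Any mutation that flips bit~$r$ down and one $0$-bit $j>r$ of the stretch up is accepted (the objective value is unchanged), and it \emph{increases} any weight function that strictly grows with position; conditioned on bit~$r$ being flipped, the expected weight gained from such exchanges is of order $\tfrac{c}{n}\sum_{j}(w_j-w_r)=\Theta(c)\,(\bar w - w_r)$, while the clean single-bit flip only recovers about $e^{-c}w_r$. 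So positive drift forces the weights on such a stretch to agree to within a factor $1+O(e^{-c}/c)$ --- essentially to be proportional to the coefficients --- whereas a degree-$d\ge 1$ polynomial spreads them by a factor $2^{d}$. This is exactly why the paper uses the \emph{copy regime} ($w_i/w_{r_B}=a_i/a_{r_B}$) on long blocks: proportionality turns the acceptance inequality on the $a_j$ into the same inequality on the $w_j$, so lateral exchanges can never increase $\Phi_f$.

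Once you adopt proportionality where it is forced, your other requirement collapses: with $n/2$ coefficients equal to $1$ followed by $n/2$ coefficients equal to $2^{n}$, the whole index set forms a single long block, the copy regime copies the $2^{n}$ jump into the weights, $\Phi_{\max,f}$ is exponential, and Theorem~\ref{thm:classical} alone only gives $O(n^2)$. The paper's ``piece-wise polynomial drift'' is not ``weights that are polynomials on pieces''; it is the observation (Lemma~\ref{lem:piecewise}) that one may tolerate a \emph{constant number of super-polynomial jumps} in the weights provided each jump coincides with a coefficient jump of ratio $>n^2$: the induced partition of the search space is then fitness-based, the EA never returns to a higher part, and Theorem~\ref{thm:classical} is applied once per part, with only a polynomial weight ratio inside each part. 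That is also precisely where the constant~$k$ and the $k\exp(-\lambda)$ tail in the theorem statement come from; your reduction produces neither and has no route around the $\Phi_{\max,f}$ obstruction. (Your verification of the drift condition would also need restructuring --- the paper decomposes by the leftmost flipped $1$-bit and uses Lemma~\ref{liprime} to show that acceptance forces the prefix to be untouched --- but the two points above are the substantive gaps.)
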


We prove Theorem~\ref{thm:main} by constructing
a feasible family of drift functions 
for the EA
that is \emph{piece-wise polynomial} (a notion
that will be defined in Section~\ref{subsecpiecedrift}).
Lemma~\ref{lem:piecewise} extends Theorem~\ref{thm:classical} to piece-wise polynomial feasible families 
of drift functions, allowing us to prove Theorem~\ref{thm:main}.

  Theorem~\ref{thm:main} is interesting for two reasons. On the methodological side, 
  the proof of the theorem greatly enlarges our understanding about
 how to choose good
 drift functions. This might enable better solutions for some problems where drift analysis has not yet been very successful. 
 Examples are the minimum spanning tree problem~\cite{NeumannW07} and the single-criteria formulation of the single-source shortest path problem~\cite{BaswanaBDFKN2009}. For both problems, the known bounds on
 the expected optimisation time contain a $\log(f_{\max})$-factor, stemming from the fact that, at least implicitly, 
 drift analysis with the trivial family of drift functions with $\Phi_f=f$  is conducted.

Of course, our result is also interesting because it for the first time shows that 
 linear functions are optimised by the (1+1) EA in time $O(n(f) \log n(f))$, regardless of what mutation probability $p_n = c/n$ is used. Note that this is not obvious. In~\cite{DoerrPPSN10}, the authors show that already for monotone functions, a constant factor change in the mutation probability can change the optimisation time from polynomial to exponential.

\subsection{Piece-wise polynomial drift}\label{subsecpiecedrift}

Let $F$ be a family of linear objective functions over bit strings.
Let $\Phi$ be a feasible family of drift functions for a (1+1)-EA for minimising~$F$.

We start with an elementary observation about  $\Phi$,  which is that, in order  
to obtain an $O(n(f) \log n(f))$ bound on the expected optimisation time,
we do not really need  $\Phi_{\max,f}$ to be 
bounded from above by a polynomial in~$n(f)$ ---
we can afford to have a constant number of ``huge jumps''. The following arguments can be seen as a variation of the fitness level method~\cite{Wegener02}.

\begin{definition}
Fix $k\in \N$. Suppose that, for every $ f\in F$, 
$\mathcal{M}^f = M^f_0 , \ldots, M^f_k$
is a partition of   $\Omega_{f} $.
Let $\mathcal{M} = \{\calM^f \mid f\in F\}$.
We say~$\calM$  is a \emph{family of fitness-based $k$-partitions for~$F$} if for all $f \in F$,
\begin{enumerate}
 \item $M^f_0 = \{{\bf 0}\}$,
 \item for all $i<j$, $x \in M^f_i$ and $y \in M^f_j$, we have $f(x) < f(y)$.
\end{enumerate} 
\end{definition} 
 
We use the  notation $\min \Phi_f(M^{f}_j)$ to denote
$\min\{\Phi_f(x) \mid x \in M^{f}_j\}$ and the notation $\max \Phi_f(M^{f}_j)$
to denote $ \max\{\Phi_f(x) \mid x \in M^{f}_j\}$.

\begin{lemma}
Let $F$ be a family of linear objective functions over bit strings.
Let $\Phi$ be a $\nu$-feasible family of drift functions for a (1+1)-EA for minimising~$F$.
Let $\calM$ be a family of fitness-based $k$-partitions for~$F$.
Then there is an $n_1\in \N$ such that, for every $f\in F$ with $n(f)\geq n_1$, 
the expected optimisation time of the EA is at most
$$\nu(n(f)) \sum_{j = 1}^k \left(\ln(\max \Phi_f(M^{f}_j)) - \ln(\min \Phi_f(M^{f}_j))+1\right) .$$
Also, for any 
$\lambda >0$,  the probability that the optimisation time exceeds 
$$\  \sum_{j = 1}^k \left\lceil \nu(n(f)) \left(\ln(\max \Phi_f(M^{f}_j)) - \ln(\min \Phi_f(M^{f}_j) + \lambda \right)\right\rceil$$
 is  at most $k \exp(-\lambda )$.
 \label{lem:piecewise}
\end{lemma}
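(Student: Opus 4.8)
The plan is to decompose the optimisation process according to which block $M^f_j$ of the fitness-based partition the current solution lies in, and to apply (a slight adaptation of) Theorem~\ref{thm:classical} separately on each block. First I would fix $f\in F$ with $n(f)$ large enough that the conclusions of Theorem~\ref{thm:classical} apply, and write $n=n(f)$, $\calM^f = M^f_0,\ldots,M^f_k$. The key structural observation is that, because $\calM$ is a family of fitness-based $k$-partitions and the EA only accepts moves that do not increase $f$, the index of the block containing the current solution is non-increasing over time: once the solution leaves $M^f_j$ (moving to a block with strictly smaller fitness) it never returns to $M^f_j$ or to any block of higher index. Since $M^f_0 = \{\mathbf 0\}$ consists of optimal points (note that for linear $F$ with positive coefficients, $\mathbf 0$ is the unique optimum), reaching $M^f_0$ is exactly reaching the optimum. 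So the total optimisation time is the sum, over $j$ from $k$ down to~$1$, of the time $T_j$ spent with the current solution in block $M^f_j$ (before it first leaves that block for a lower one).

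Next I would bound each $T_j$. Fix $j$ and consider the process started from an arbitrary $x_0\in M^f_j$; run it until the current solution first leaves $M^f_j$. Let $\Phi_{[t]}$ be the value of $\Phi_f$ after $t$ steps of this sub-process. The third condition of Definition~\ref{def:feasdrift} gives the multiplicative drift $E[\Phi_{[t+1]}\mid x_{[t]}]\le (1-1/\nu(n))\Phi_{[t]}$ whenever $x_{[t]}$ is non-optimal, which it is as long as we are still inside $M^f_j$ with $j\ge 1$. The argument of Theorem~\ref{thm:classical} now applies verbatim, with the single change that the relevant ``maximum'' potential is $\max\Phi_f(M^f_j)$ rather than $\Phi_{\max,f}$, and the process is deemed to have ``succeeded'' once $\Phi_{[t]}<\min\Phi_f(M^f_j)$ --- which certainly happens no later than when the solution leaves $M^f_j$, since every $y\in M^f_j$ has $\Phi_f(y)\ge \min\Phi_f(M^f_j)$. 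Rescaling the potential by $\min\Phi_f(M^f_j)$ (so the ``success'' threshold becomes $1$) turns the relevant ratio into $\max\Phi_f(M^f_j)/\min\Phi_f(M^f_j)$, whose logarithm is $\ln(\max\Phi_f(M^f_j))-\ln(\min\Phi_f(M^f_j))$. Thus Theorem~\ref{thm:classical} yields $E[T_j]\le \nu(n)\bigl(\ln(\max\Phi_f(M^f_j))-\ln(\min\Phi_f(M^f_j))+1\bigr)$, and, for the tail bound, $\Pr(T_j > \lceil\nu(n)(\ln(\max\Phi_f(M^f_j))-\ln(\min\Phi_f(M^f_j))+\lambda)\rceil)\le \exp(-\lambda)$.

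Finally I would assemble the pieces. The expected-time bound follows immediately by linearity of expectation: $E[T_{\opt}] = \sum_{j=1}^k E[T_j] \le \nu(n)\sum_{j=1}^k\bigl(\ln(\max\Phi_f(M^f_j))-\ln(\min\Phi_f(M^f_j))+1\bigr)$. For the tail bound, if the total optimisation time exceeds $\sum_{j=1}^k\lceil\nu(n)(\ln(\max\Phi_f(M^f_j))-\ln(\min\Phi_f(M^f_j))+\lambda)\rceil$, then at least one of the $k$ phases must have exceeded its individual bound $\lceil\nu(n)(\ln(\max\Phi_f(M^f_j))-\ln(\min\Phi_f(M^f_j))+\lambda)\rceil$, so a union bound over the $k$ phases gives probability at most $k\exp(-\lambda)$. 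The step I expect to require the most care is the rigorous handling of the strong Markov / restart argument: one must argue that, conditioned on the whole history up to the first entry into $M^f_j$, the sub-process of $T_j$ steps satisfies the drift hypothesis regardless of the (adversarial) entry point $x_0\in M^f_j$, so that the per-phase bound holds uniformly; this is exactly where the ``fix an arbitrary initial solution $x_0$'' device in the proof of Theorem~\ref{thm:classical} is used, and where the monotonicity of the block index guarantees the phases do not interleave. Everything else is bookkeeping.
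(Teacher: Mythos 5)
Your proof is correct and follows essentially the same route as the paper: both arguments restrict attention to the phase spent in each $M^f_j$, rescale $\Phi_f$ by $\min\Phi_f(M^f_j)$ so that Theorem~\ref{thm:classical} applies with ratio $\max\Phi_f(M^f_j)/\min\Phi_f(M^f_j)$ (the paper's auxiliary potential $\Psi_{f,j}$), and then sum the per-phase expectations and union-bound the per-phase tails. The only quibble is one reversed direction in your wording: what you need (and what your justification ``every $y\in M^f_j$ has $\Phi_f(y)\ge\min\Phi_f(M^f_j)$'' actually proves) is that the solution leaves $M^f_j$ \emph{no later than} the first time $\Phi_{[t]}<\min\Phi_f(M^f_j)$, so that $T_j$ is bounded above by that hitting time --- not that the hitting time precedes the departure.
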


\begin{proof}
  
Let $n_1$ be the quantity in Theorem~\ref{thm:classical} 
(which is at least as large as the quantity $n_0$ in Definition~\ref{def:feasdrift}).
Let $f\in F$ with $n(f)\geq n_1$.  
 For $0\leq j \leq k$, let $\Omega_{f,j} = \bigcup_{\ell=0}^j M^f_\ell$ and let $\mu_{f,j} = \min \Phi_f(M^f_j)$.
For $1\leq j \leq k$,
define $\Psi_{f,j}  :\Omega_{f,j} \to \R$ as follows. 
If $\Phi_f(x)\geq \mu_{f,j}$ then $\Psi_{f,j}(x) = \Phi_f(x)/\mu_{f,j}$.
Otherwise, $\Psi_{f,j}(x)=0$.

Now for $j\in\{1,\ldots,k\}$, consider 
restricting the search space to $\Omega_{f,j}$.
Since the partition $\calM^f$ is fitness based, we conclude that,
if the EA is started with input~$f$, and an initial solution in $\Omega_{f,j}$,
all new solutions that are accepted by the EA are in $\Omega_{f,j}$.

Considering all solutions in $\Omega_{f,j-1}$ to be equivalent to the all-zero state ${\bf 0}$,
we note that $\{\Psi_{f,j}\mid  f\in F\}$ satisfies the first two conditions of being a $\nu$-feasible 
family of drift functions for~$F$ on~$\{\Omega_{f,j}\}$.
Also, if $\Phi_f(x)\geq \mu_{f,j}$
then $E[\Phi_f(x_{\new})] \le (1 - 1/\nu(n(f))) \Phi_f(x)$
so 
$$E[\Psi_{f,j}(x_{\new})] \leq
E[\Phi_f(x_{\new})/\mu_{f,j}] \leq (1-1/\nu(n(f))) \Psi_{f,j}(x).$$
So, by Theorem~\ref{thm:classical},
the expected time until a solution in $\Omega_{f,j-1}$ is reached is at most
$$\nu(n(f))(1+ \ln \max\{\Psi_{f,j}(x) \mid x\in \Omega_{f,j}\}),$$
which is at most
$$\nu(n(f)) \left(1+\ln \left(
\frac{\max \Phi_f(M^f_j)}
{\min \Phi_f(M^f_j)} \right)\right).$$

This gives the desired result, summing from $j=k$ down to $j=1$.

For the high probability statement, again from
Theorem~\ref{thm:classical}, we conclude that with probability at
least $1- \exp(-\lambda )$ , 
$$ \left\lceil {\nu(n(f))}\left(\ln\left(
\frac{\max \Phi_f(M^f_j)}{\min \Phi_f(M^f_j)}
\right) +  \lambda \right)\right\rceil$$
iterations suffice to go from a solution in $\Omega_{f,j}$ to one in $\Omega_{f,j-1}$. 

\end{proof}

\begin{definition}
\label{usedef}
Suppose that $\Phi$ is a family of feasible drift functions for~$F$.
We will say that $\Phi$ is \emph{piece-wise polynomial} (with respect to  the (1+1)-EA), 
if 
there is a constant~$k$  
and a family $\calM$ of fitness based $k$-partitions 
for $F$
such that for every $j\in\{1,\ldots,k\}$, 
$\ln(\max \Phi_f(M^f_j)) - \ln(\min \Phi_f(M^f_j)) = O(\log n(f))$.  
\end{definition}

If  $\Phi$ is a family of feasible drift functions for
a (1+1)-EA for minimising~$F$, and $\Phi$
is piece-wise polynomial with respect to the EA,
then the optimisation time bound given by Lemma~\ref{lem:piecewise} is $O(n(f) \log n(f))$.

\section{Construction of the Drift Function}

Let $F$ be a linear family of objective functions over bit~strings (see Definition~\ref{def:lin}). 
Fix a constant~$c$ and consider the (1+1) EA for minimising~$F$ with independent bit-mutation rate $p_n = c/n$.
We aim to construct a family~$\Phi$ of feasible drift functions
for the EA which is piece-wise polynomial with respect to  the EA.

\subsection{Notation and parameters}
\label{sec:param}

Recall from Definition~\ref{def:lin} that $\Omega_f = \{0,1\}^{n(f)}$ and that an element $x\in \Omega_f$
is written as a string of $n(f)$ bits, $x = x_{n(f)} \ldots x_1$.
In the proof, we shall often use the 
word ``left'' to refer to the most-significant bit (with the largest index, index~$n(f)$) of~$x$ and ``right'' to refer to the
least-significant bit (with the smallest index, index~$1$).

The proof will use several parameters, which we discuss here.
We start by fixing an arbitrarily-small positive constant~$\eps$. This is constant will be used
to precisely formulate the intermediate results.
To define the family~$\Phi$, we will use a sufficiently large constant~$K\geq 1$ (depending on~$c$ and~$\eps$)
and a sufficiently small positive constant~$\gamma$ (depending on~$c$, $\eps$ and $K$).

\subsection{Splitting into blocks}

The difficulty in defining a suitable drift function~$\Phi_f$
is that the optimisation of $f$ via the EA heavily depends on the coefficients $a_i$. 
If these are steeply increasing, as in Example~\ref{example:binval},
whether a new solution is accepted or not is determined by
the value of the leftmost bit that is flipped. On the other hand,
if these are of comparable size, as in Example~\ref{example:onemax},
the difference between the number of ``good'' bit-flips (turing a~$1$ into a~$0$) and
the number of ``bad'' bit-flips (turing a~$0$ into a~$1$) determines whether a new
solution is accepted.
Of course, the precise definitions of ``steeply increasing'' and ``comparable size'' depend on the 
constant~$c$ in the mutation probability. Also, an objective function~$f$ can be of a mixed type, having regions with steeply increasing coefficients and also regions where coefficients are of comparable size. 

Fix an objective function~$f$ with $n(f)=n$. 
To analyse~$f$ and define the corresponding
drift function~$\Phi_f$, we split the bit positions $\{1, \ldots, n\}$ into \emph{blocks}. The idea is that, within a block, one of the two behaviours is dominant. The definition of blocks, naturally, has to 
allow us to analyse the interaction between different blocks.

We first split the bit positions $\{1,\ldots,n \}$ into  \emph{miniblocks}. 
Start with $j=1$.
A miniblock starting at bit position~$j$ is constructed as follows.
If $a_{n}/a_j<n^2$, then $\{j,\ldots,n\}$ is a single miniblock.
Otherwise, let $i$ be the minimum value in $\{j+1,\ldots,n\}$ 
such that $a_i/a_j\geq n^2$. 
Then the set $\{j,\ldots,i\}$ is a miniblock. 
If $i=n$, we are finished. Otherwise,
set $j=i$ and repeat to form the next miniblock, starting at bit position~$j$.
Note that consecutive miniblocks overlap by one bit position.

The next thing that we do is merge consecutive pairs of miniblocks into 
\emph{blocks}.
To start out with, we just go through the miniblocks from right to left, 
making a block out of each pair of miniblocks. Note that this is (intentionally) different from just defining blocks analogous to miniblocks with the $n^{2}$ replaced by $n^{4}$. Note further that again consecutive blocks overlap in one bit position.

A block is said to be \emph{long} if it contains at least $\gamma n$ bit positions (recall that the
parameter~$\gamma$ is from Section~\ref{sec:param})
and \emph{short} otherwise.
It helps our analysis if any pair of long blocks has
at least three
short blocks in between.
So if two long blocks are separated by at most two short blocks, then we combine
the whole thing into a single long block. We repeat this (at most a constant number of times since there are less than $1/\gamma$ long blocks initially) until all remaining long blocks are separated by at least three short blocks.

We will use $\ell_B$ to denote the leftmost bit position in block~$B$
and $r_B$ to denote the rightmost bit position in block~$B$.
As long as $B$ is not the leftmost block, we have
$a_{\ell_B} / a_{r_B}\geq n^4$.

\subsection{Definition of $\Phi_f$}
\label{sec:defPhif}

We will define weights 
 $w_1, \ldots, w_n \in \R$ such that $\Phi_f(x) = \sum_{i = 1}^n w_i x_i$. We call the $w_i$  \emph{weights} to distinguish them from the \emph{coefficients} $a_1, \ldots, a_n$ of $f$. 

We define the weights $w_1,\ldots,w_n$ as follows, starting
with $w_1=1$.
Suppose that bit position~$i$ is in block~$B$, that $i\neq r_B$, and that $w_{r_B}$ is already defined.
If block~$B$ is a long block, or is immediately to the left of a long block,
then we define $w_i$  by $w_i = w_{r_B} a_i / a_{r_B}$.
We call this the \emph{copy regime} since $w_i/w_{r_B} = a_i/a_{r_B}$.
Otherwise, we are in the \emph{damped regime} and we define $w_i$ by
$$w_i = w_{r_B} \min\{K^{(i-r_B)c/n},a_i/a_{r_B}\},$$
where~$K$ is the parameter from Section~\ref{sec:param}. 

It will be a major effort in the remainder of the paper to show that this $\{\Phi_f \mid f\in F\}$ is a feasible 
family of drift functions for the EA. It is easier to see
that $\{\Phi_f\}$ is piece-wise polynomial with respect to the EA, so we do this next.

\begin{lemma}
Let $F$ be a linear family of objective functions over
bit~strings. Consider the (1+1) EA for minimising~$F$ with independent bit-mutation
rate $p_n=c/n$. The family $\Phi = \{\Phi_f\}$ of drift functions
constructed above is piece-wise polynomial with respect to the EA.
\label{lem:gotpiecewise}
\end{lemma}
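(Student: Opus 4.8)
The plan is to exhibit a constant $k$ (depending only on $c$, $\gamma$ and $K$) and a family $\calM$ of fitness-based $k$-partitions for $F$ under which the weights $w_i$, and hence $\Phi_f$, vary by only a factor $n(f)^{O(1)}$ within each class; by the remark following Definition~\ref{usedef} this is exactly what is required. Write $n=n(f)$ and note first that $w_1\le\cdots\le w_n$, since every recursive rule defining the $w_i$ multiplies by a factor $\ge 1$. Since each long block occupies at least $\gamma n$ positions there are only $O(1/\gamma)$ long blocks, hence only $O(1/\gamma)$ copy-regime blocks, and since (even after the merging step) each such block is the union of $O(1/\gamma)$ miniblocks, the set $\Pi$ of positions $p$ that are the left end of a miniblock lying inside a copy block and satisfy $a_p/a_{p-1}>n^3$ has $|\Pi|=O(1/\gamma^2)$. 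Writing $\Pi=\{\pi_1<\cdots<\pi_r\}$ with $\pi_0=1$, $\pi_{r+1}=n+1$ and $J_s=\{\pi_s,\ldots,\pi_{s+1}-1\}$, I would take $M^f_0=\{{\bf 0}\}$ and, for $1\le j\le k:=r+1$, $M^f_j=\{x\neq{\bf 0}:\ \max\{i:x_i=1\}\in J_{r+1-j}\}$, so that $k=|\Pi|+1$ is a constant.

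The verification then rests on two weight estimates. First, over any maximal run of consecutive damped-regime blocks spanning positions $p<\cdots<q$ one has $w_q/w_p\le\prod_B K^{c(\ell_B-r_B)/n}=K^{c\sum_B(\ell_B-r_B)/n}\le K^{c}$, because the block lengths in the run telescope to $q-p\le n$; so the weights move by only a constant factor across a damped run. Second, inside a copy block $w_i=w_{r_B}a_i/a_{r_B}$ exactly, and over the interior $\{j,\ldots,i-1\}$ of each of its constituent miniblocks the coefficient ratio is $<n^2$ by construction; combining these with $|\Pi|=O(1/\gamma^2)$ shows that the weight ratio $w_v/w_u$ over any interval $J_s$ (endpoints $u\le v$) is $n^{O(1)}$, since $J_s$ meets only $O(1/\gamma)$ copy blocks and damped runs and no copy block with a super-polynomial internal ratio lies in its interior. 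With this in hand, $\calM$ is fitness-based: if $x\in M^f_i$, $y\in M^f_j$ with $i<j$ there is $\pi^*\in\Pi$ with $q:=\max\{i:y_i=1\}<\pi^*\le p:=\max\{i:x_i=1\}$, and then, using $a_{\pi^*}/a_{\pi^*-1}>n^3$, $f(y)\le\sum_{i\le q}a_i\le n\,a_{\pi^*-1}<a_{\pi^*}/n^2\le f(x)$. And for $x\in M^f_j$ with $J_{r+1-j}=\{u,\ldots,v\}$ we have $w_u\le w_p\le\Phi_f(x)\le\sum_{i\le v}w_i\le n\,w_v$, so $\ln(\max\Phi_f(M^f_j))-\ln(\min\Phi_f(M^f_j))\le\ln n+\ln(w_v/w_u)=O(\log n)$ by the second estimate. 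This establishes the lemma.

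The one genuinely delicate point is the tension created by needing $\calM$ to be \emph{both} fitness-based and $\Phi_f$-polynomial: fitness-basedness forces the class boundaries to sit where $f$ (equivalently the prefix sum $\sum_{i\le p}a_i$) jumps by more than a factor $n$, whereas polynomiality wants them where the weights jump. These two requirements are reconcilable only because the sole positions at which the $w_i$ can jump super-polynomially are left ends of miniblocks inside copy blocks, where $w_i\propto a_i$, so such a jump is simultaneously an $f$-jump of the same size. The remaining work — the bookkeeping behind the second estimate, namely propagating the damped telescoping bound $\prod_B K^{c(\ell_B-r_B)/n}\le K^c$ together with the $<n^2$ miniblock bounds across the $O(1/\gamma)$ damped runs and copy blocks that can fall between two consecutive cuts — is routine, as are the harmless edge cases (the leftmost block may be a single miniblock, and a $J_s$ may begin or end inside a copy block).
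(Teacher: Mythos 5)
Your proposal is essentially the paper's own argument: cut the bit positions at the constantly many copy-regime miniblock left-ends where the coefficient (equivalently weight) ratio jumps polynomially, partition bit strings by the location of their leftmost one, get fitness-basedness from the size of those jumps, and get the $O(\log n)$ spread within each class from the telescoping $K^{c}$ bound over damped runs together with the per-miniblock $n^{O(1)}$ bounds in copy blocks. The one concrete defect is that your class indexing is reversed: with $M^f_j$ defined via $J_{r+1-j}$ your own verification yields $f(y)<f(x)$ for $x\in M^f_i$, $y\in M^f_j$, $i<j$, whereas the definition of a fitness-based partition requires $f(x)<f(y)$; replacing $J_{r+1-j}$ by $J_{j-1}$ (so that lower-indexed classes have lower fitness) fixes this trivially and the rest of the argument goes through as written.
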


\begin{proof}

Let $k = 6\lceil 1/\gamma \rceil + 1$.
We now construct a family of fitness-based $k$-partitions for~$F$.

Let $f$ be an objective function in~$F$ and let $n=n(f)$.
We now define the partition $\calM^f$.
We call a bit position $i \in [2..n]$ a \emph{jump} (for the objective function~$f$) if 
\begin{itemize}
\item $i$ is in a copy regime, and
\item $w_i/w_{i-1}>n^2$.
\end{itemize}
By the construction of the blocks, bit position~$i$ is
the leftmost bit position of a  miniblock contained in 
either (1)
a long block, or (2) a short block immediately to the left of a long block. 
Since there are at most $\lceil1/\gamma\rceil$ long blocks, there are at most $k-1$ jumps. 
(The easiest way to see this is to think about the original long blocks, prior to any merges.
Each block contains two miniblocks. Within a long block~$B$, there may be two jumps, and there
may be two in each of the \emph{two} blocks to the left of~$B$
--- the block immediately to the left of~$B$ is always in the copy regime, but the block 
to its left may also be merged into a long block with~$B$.)
Suppose there are $k'$ jumps, and let $\calM^f_j=\emptyset$, for $k'+1<j\leq k$.

Let $i_1, \ldots, i_{k'}$ be an increasing enumeration of the jumps. 
Set $i_0 = 1$ and $i_{k'+1} = n+1$ to ease the following definition. 
For $j = 1, \ldots, k'+1$, let 
$N_j$ be $\{i_{j-1},\ldots,i_j-1\}$ and define
$$\calM_j^f = \{x \in \{0,1\}^n \mid \exists i \in 
N_j: x_i = 1 \wedge \forall i \ge i_j : x_i = 0\}.$$ 
Let $\calM^f_0 = \{{\bf 0}\}$. 
Informally, $N_j$ is the set of bit positions starting at the jump $i_{j-1}$ and going up
to, but not including, the jump $i_j$. So
$\{N_j \mid 1 \leq j \leq k'+1\}$ is a partition of the bit positions.
Then $\calM_j^f$ is the set of bit~strings~$x$ which have the leftmost ``$1$''-bit in 
$N_j$.

In order to show that $\calM = \{\calM^f \mid f\in F\}$ is a family of fitness-based $k$-partitions
for~$F$, we need only  show that the following condition is satisfied:
for all $i<j$, $x \in M^f_i$ and $y \in M^f_j$, 
we have $f(x) < f(y)$. 
The condition follows from the fact that 
$a_i/a_{i-1} = w_i/w_{i-1} > n^2$ for
all jumps $i$.
 
In order to show that $\Phi$ is piece-wise polynomial
with respect to the EA, 
it remains to prove that, for every  $j\in\{1,\ldots,k\}$,
$\ln(\max \Phi_f(M^f_j)) - \ln(\min \Phi_f(M^f_j)) = O(\log n(f))$.
Fix any such~$j$.
Let $r_f =   \max \Phi_f(M^f_j) / \min \Phi_f(M^f_j)$. We show that $r_f$ is upper-bounded
by a polynomial in~$n$.

For a set of bit positions $I\subseteq \{1,\ldots,n\}$, let
$\min I$ denote the minimum element in~$I$ and let $\max I$ denote the maximum element.
Since $w_1 \leq \ldots \leq w_n$, 
$\min \Phi_f(M^f_j) = w_{\min N_j} = w_{i_{j-1}}$.
Similarly, 
$\max \Phi_f(M^f_j) = \sum_{i = 1}^{\max N_j} w_i \le n w_{\max N_j} = n w_{i_j -1}$.
Hence $r_f \le n w_{\max N_j}/w_{\min N_j}$. 

We rewrite 
\begin{equation}
\label{eq:product}
\frac{w_{\max N_j}}{w_{\min N_j}} = \prod_{B: B \cap N_j \neq \emptyset} 
\frac{w_{\max (B \cap N_j)}}{w_{\min (B\cap N_j)}},
\end{equation}
where $B$ runs over all miniblocks that have a non-empty intersection with $N_j$. Note that the above is true because adjacent miniblocks intersect in exactly one bit position. 

If $B$ is a miniblock in a damped regime, then $w_{\max (B \cap N_j)}/w_{\min (B\cap N_j)} \le w_{\ell_B}/w_{r_B} = K^{(\ell_B - r_B)c/n}$. In consequence, the contribution of all weights in damped regimes to (\ref{eq:product}) is at most a factor $K^{c}$.

What remains is the contribution of miniblocks in long blocks and in those short blocks immediately to the left of a long block. Let $B$ be such a miniblock. 
If  $B\cap N_j = \{\ell_B\}$ then $w_{\max (B \cap N_j)}/w_{\min (B\cap N_j)} = 1$.
Otherwise, note that 
$$\frac{w_{\max (B \cap N_j)}}{w_{\min (B\cap N_j)}} \leq 
\frac{w_{\ell_B}}{w_{r_B}} =
\left(\frac{w_{\ell_B}}{w_{\ell_B-1}} \right)\left(\frac{w_{\ell_B-1}}{w_{r_B}}\right).$$
The first factor is at most $n^2$, since $\ell_B$ is not a jump, the second factor is at most $w_{\ell_B-1}/w_{r_B}=a_{\ell_B-1}/a_{r_B} \le n^2$ by the definition of a miniblock.
\end{proof}

\subsection{Auxiliary results concerning the weights $w_i$}
\label{sec:revlabel}

Fix an objective function~$f\in F$ and let $n=n(f)$.
We will assume that $n$ is sufficiently large with respect to the constants~$c$, $\eps$, $K$ and $\gamma$ since our objective
is to construct a family~$\Phi$ of feasible drift functions for the EA and the definition of such a family (Definition~\ref{def:feasdrift}) is only concerned with sufficiently large~$n$.
The definition of $\Phi_f$   allows us to prove a number of useful facts.  
The first of these uses a geometric series to bound sums of weights in the damped regime.

\begin{lemma}
\label{lwsumgeometric}
Let $B_0,\ldots,B_k$ be a consecutive sequence of blocks (left to right) in the damped regime with 
$\ell_{B_0} = r_{B_k} + t$.
Then \[\sum_{j  \in B_0\cup \ldots \cup B_k} w_j \le K^{tc/n} w_{r_{B_k}} \left(\frac{n}{c \ln K}+1\right).\]
\end{lemma}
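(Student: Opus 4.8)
The plan is to reduce the claim to a geometric-series estimate by first proving the pointwise bound
$$w_j \le K^{(j - r_{B_k})c/n}\, w_{r_{B_k}} \qquad \text{for every } j \in B_0 \cup \ldots \cup B_k,$$
and then summing over the relevant bit positions. Since consecutive blocks overlap in exactly one bit position, the set $B_0 \cup \ldots \cup B_k$ is precisely the interval of bit positions $\{r_{B_k}, r_{B_k}+1, \ldots, \ell_{B_0}\}$, which has $t+1$ elements because $\ell_{B_0} = r_{B_k}+t$.

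To establish the pointwise bound, I would induct on the blocks, processing $B_k, B_{k-1}, \ldots, B_0$ from right to left. For the base case, block $B_k$ lies in a damped regime, so the definition of the weights in Section~\ref{sec:defPhif} gives $w_j \le w_{r_{B_k}}\min\{K^{(j-r_{B_k})c/n}, a_j/a_{r_{B_k}}\} \le w_{r_{B_k}} K^{(j-r_{B_k})c/n}$ for all $j \in B_k$. For the inductive step, assume the bound holds throughout $B_{m+1}\cup \ldots \cup B_k$. Because $B_m$ lies immediately to the left of $B_{m+1}$ and consecutive blocks share one bit position, $r_{B_m} = \ell_{B_{m+1}}$. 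For $j \in B_m$, the damped-regime definition gives $w_j \le w_{r_{B_m}} K^{(j-r_{B_m})c/n}$, and applying the induction hypothesis at the position $r_{B_m} = \ell_{B_{m+1}} \in B_{m+1}$ gives $w_{r_{B_m}} \le w_{r_{B_k}} K^{(r_{B_m}-r_{B_k})c/n}$. Multiplying these and adding the exponents yields $w_j \le w_{r_{B_k}} K^{(j-r_{B_k})c/n}$, completing the induction.

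With the pointwise bound in hand, write $q := K^{c/n} > 1$; summing over $\{r_{B_k}, \ldots, r_{B_k}+t\}$ gives
$$\sum_{j \in B_0 \cup \ldots \cup B_k} w_j \le w_{r_{B_k}} \sum_{s=0}^{t} q^s = w_{r_{B_k}}\,\frac{q^{t+1}-1}{q-1} \le w_{r_{B_k}}\, q^{t}\,\frac{q}{q-1} = K^{tc/n} w_{r_{B_k}}\cdot \frac{K^{c/n}}{K^{c/n}-1}.$$
It then remains to verify the elementary inequality $\dfrac{K^{c/n}}{K^{c/n}-1} \le \dfrac{n}{c\ln K}+1$. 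Putting $a := (c\ln K)/n > 0$, so that $K^{c/n} = e^a$ and $n/(c\ln K) = 1/a$, this is equivalent to $\dfrac{a}{1-e^{-a}} \le 1+a$, i.e.\ to $(1+a)e^{-a} \le 1$, which is exactly the standard bound $1+a \le e^a$. Substituting this into the display above gives the stated bound. The only step that needs care is the additivity of the exponents across the overlapping block boundaries in the induction; everything else is a geometric sum together with $1+a \le e^a$, and there is no substantial obstacle.
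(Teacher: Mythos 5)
Your proof is correct and follows essentially the same route as the paper's: the pointwise bound $w_j \le K^{(j-r_{B_k})c/n} w_{r_{B_k}}$, a geometric series, and the inequality $1+z \le e^z$ to bound $1/(1-K^{-c/n})$ by $n/(c\ln K)+1$. The only difference is cosmetic: you spell out the cross-block induction that the paper leaves implicit, and you sum a finite rather than an infinite geometric series, arriving at the identical bound.
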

\begin{proof}
 
For $0\leq h \leq t$
we have $w_{\ell_{B_0} - h} \leq K^{tc/n} w_{r_{B_k}} K^{-hc/n}$.
Now
\[
\sum_{j  \in B_0\cup \ldots \cup B_k} w_j 
  \leq 
K^{tc/n} w_{r_{B_k}}
\sum_{h=0}^\infty K^{-ch/n}
= K^{tc/n} w_{r_{B_k}}
\frac{1}{1-K^{-c/n}}
.\]
Now $K^{c/n} = 
e^{(\ln K)c/n} \geq 1 + (\ln K)c/n$,
so  
$$
\frac{1}{1-K^{-c/n}} \leq
\frac{1}{1-\frac{1}{1+(\ln K)c/n}}
= 
\left(
\frac{n}{c \ln K}+1
\right).
$$
\end{proof}

The next lemma gives the relationship between the leftmost weight and the rightmost weight
in a block in the damped regime.
\begin{lemma}
\label{lem:whatever}
If $B$ is a block in the damped regime with $\ell_B = r_B + t$
and $B$ is not the leftmost block, then $w_{\ell_B}=  K^{t c /n} w_{r_B}$.
\end{lemma}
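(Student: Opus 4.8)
The plan is to unwind the definition of the weights in the damped regime and check that, at the left end of the block, the truncation by $a_i/a_{r_B}$ is inactive once $n$ is large. By the construction in Section~\ref{sec:defPhif}, since $B$ lies in the damped regime, every bit position $i\in B$ with $i\neq r_B$ has $w_i = w_{r_B}\min\{K^{(i-r_B)c/n},\,a_i/a_{r_B}\}$. Specialising to $i=\ell_B$ and using $\ell_B-r_B=t$ gives $w_{\ell_B}=w_{r_B}\min\{K^{tc/n},\,a_{\ell_B}/a_{r_B}\}$ (and if $t=0$ the claim is immediate since $K^{0}=1$), so it suffices to show $K^{tc/n}\le a_{\ell_B}/a_{r_B}$.

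For the right-hand side I would simply quote the fact recorded at the end of the block construction: since $B$ is not the leftmost block, $a_{\ell_B}/a_{r_B}\ge n^{4}$. For the left-hand side, a block is a subset of $\{1,\ldots,n\}$, so $t\le n-1$, and with $K\ge 1$ this yields $K^{tc/n}\le K^{c}$, a constant depending only on $c$ and $K$. Under the standing assumption (Section~\ref{sec:revlabel}) that $n$ is sufficiently large relative to $c$ and $K$, we therefore have $K^{tc/n}\le K^{c}\le n^{4}\le a_{\ell_B}/a_{r_B}$, so the minimum in the displayed formula is attained by its first argument and $w_{\ell_B}=K^{tc/n}w_{r_B}$, as claimed.

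There is no real obstacle here; the one point to keep honest is that the truncating term genuinely cannot bind, which is exactly what the $n^{4}$ gap built into the block construction (together with largeness of $n$) buys us. I would also remark, although it is not needed for this statement, that a damped-regime block is short, so in fact $t<\gamma n$ and $K^{tc/n}\le K^{\gamma c}$ --- the same estimate, only tighter --- which is the form that tends to be convenient when these weights are used later.
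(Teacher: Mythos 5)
Your proof is correct and follows essentially the same route as the paper's: unwind the damped-regime definition of $w_{\ell_B}$ and verify the chain $K^{tc/n}\le K^{c}\le n^{4}\le a_{\ell_B}/a_{r_B}$, where the first inequality uses $t\le n$, the second the standing assumption that $n$ is large relative to $K$ and $c$, and the third that $B$ is not the leftmost block. The closing remark about short blocks is a harmless (and accurate) bonus but not needed.
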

\begin{proof}
This follows from the definition of the weights in the damped regime, since
$$\frac{a_{\ell_B}}{a_{r_B}} \geq n^4 \geq K^c \geq K^{tc/n}.$$
The second inequality follows from our assumption (at the beginning of Section~\ref{sec:revlabel}) that $n$ is sufficiently large with respect to~$K$ and~$c$.
\end{proof}

Lemmas~\ref{lwsumgeometric} and~\ref{lem:whatever} give the following corollary.
\begin{corollary}
\label{cwsumgeometric}
Let $B_0,\ldots,B_k$ be a consecutive sequence of blocks (left to right) in the damped regime with 
$\ell_{B_0} = r_{B_k} + t$. If $B_0$ is not the leftmost block
then \[\sum_{j  \in B_0\cup \ldots \cup B_k} w_j \le w_{\ell_{B_0}} \left(\frac{n}{c \ln K}+1\right).\]
\end{corollary}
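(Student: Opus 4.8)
The plan is to obtain the corollary by simply rewriting the right-hand side of Lemma~\ref{lwsumgeometric}, identifying the factor $K^{tc/n} w_{r_{B_k}}$ with $w_{\ell_{B_0}}$ using Lemma~\ref{lem:whatever} applied block by block.

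First I would check that Lemma~\ref{lem:whatever} is applicable to each of $B_0, \ldots, B_k$. This needs that none of them is the leftmost block. We are given that $B_0$ is not the leftmost block, and $B_1, \ldots, B_k$ all lie to the right of $B_0$, so none of them is the leftmost block either; moreover all of them are in the damped regime by hypothesis.

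Next, writing $t_j := \ell_{B_j} - r_{B_j}$ for each $j$, Lemma~\ref{lem:whatever} gives $w_{\ell_{B_j}} = K^{t_j c/n} w_{r_{B_j}}$. Since consecutive blocks overlap in exactly one bit position, $r_{B_j} = \ell_{B_{j+1}}$ for $0 \le j < k$, so $w_{r_{B_j}} = w_{\ell_{B_{j+1}}}$ and these identities chain to give $w_{\ell_{B_0}} = K^{(t_0 + \ldots + t_k)c/n} w_{r_{B_k}}$. The exponent telescopes: $\sum_{j=0}^k (\ell_{B_j} - r_{B_j}) = \ell_{B_0} - r_{B_k} = t$, again using $r_{B_j} = \ell_{B_{j+1}}$. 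Hence $K^{tc/n} w_{r_{B_k}} = w_{\ell_{B_0}}$, and substituting this into the bound of Lemma~\ref{lwsumgeometric} yields $\sum_{j \in B_0 \cup \ldots \cup B_k} w_j \le w_{\ell_{B_0}}\left(\tfrac{n}{c\ln K}+1\right)$.

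There is no genuine obstacle here; the only points requiring a little care are verifying the non-leftmost hypothesis for all of $B_0,\ldots,B_k$ (not just $B_0$) and correctly accounting for the one-bit overlap of consecutive blocks so that the telescoping of the exponents is valid.
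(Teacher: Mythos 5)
Your proposal is correct and matches the paper's (implicit) argument: the paper simply asserts that Lemmas~\ref{lwsumgeometric} and~\ref{lem:whatever} together give the corollary, and your block-by-block application of Lemma~\ref{lem:whatever} with the telescoping of exponents via the one-bit overlaps is exactly the intended way to identify $K^{tc/n}w_{r_{B_k}}$ with $w_{\ell_{B_0}}$. Your care in checking that none of $B_0,\ldots,B_k$ is the leftmost block is appropriate and correct.
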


Corollary~\ref{cwsumgeometric}
gives the following upper bound for the sum of all weights contained in, and to the right of, a short block.

\begin{lemma}\label{lwsumrightend}
Let $B$ be a short block that is not the leftmost block. 
Then \[\sum_{j \le \ell_B} w_j \le w_{\ell_B} \left(\frac{n}{c \ln K} + 1 + \gamma n + n^{-3}\right).\]
\end{lemma}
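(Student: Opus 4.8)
The plan is to split $\sum_{j\le\ell_B}w_j$ according to the block structure lying inside $B$ and to the right of $B$, using two elementary observations. First, the weights $w_i$ are non-decreasing in the bit index: in either regime $w_i$ equals $w_{r_{B'}}$ (for the block $B'$ containing $i$) times a factor that is at least $1$ and non-decreasing in $i$ for $i\ge r_{B'}$, and consecutive blocks agree on their shared position; consequently $\sum_{j\le p}w_j\le n\,w_p$ for every bit position $p$. Second, if a short block $B'$ lies immediately to the left of a long block, then $B'$ is in the copy regime, so (as $B'$ is then not the leftmost block) $w_{\ell_{B'}}/w_{r_{B'}}=a_{\ell_{B'}}/a_{r_{B'}}\ge n^4$. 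Thus a ``copy cluster'' --- a long block together with the short copy block immediately to its left --- multiplies the weight by a factor of at least $n^4$, and by the first observation everything lying strictly below such a cluster contributes at most $n\cdot n^{-4}\,w_{\ell_B}=n^{-3}w_{\ell_B}$ to the sum. Throughout, $n$ is large relative to $c,\eps,K,\gamma$, as assumed at the start of Section~\ref{sec:revlabel}.

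I would then distinguish two cases according to the regime of $B$ itself. If $B$ is in the copy regime, write $\{1,\dots,\ell_B\}=\{r_B+1,\dots,\ell_B\}\cup\{1,\dots,r_B\}$. The first set has $\ell_B-r_B=|B|-1<\gamma n$ positions (as $B$ is short), each of weight at most $w_{\ell_B}$, contributing less than $\gamma n\,w_{\ell_B}$; and since $B$ is a non-leftmost copy block, $w_{r_B}=w_{\ell_B}\,a_{r_B}/a_{\ell_B}\le n^{-4}w_{\ell_B}$, so the second set contributes at most $n\,w_{r_B}\le n^{-3}w_{\ell_B}$. Adding these gives a bound below $(\gamma n+n^{-3})\,w_{\ell_B}$, which is within the claimed bound (the term $\tfrac{n}{c\ln K}+1$ is not even needed here).

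If $B$ is in the damped regime, let $B_0=B,B_1,\dots,B_k$ be the maximal run of consecutive blocks, all in the damped regime, ordered left to right, so $B_0=B$ is the leftmost block of the run. By Corollary~\ref{cwsumgeometric} (applied with this $B_0$, which is not the leftmost block overall), the positions lying in $B_0\cup\dots\cup B_k$ contribute at most $w_{\ell_B}\,(\tfrac{n}{c\ln K}+1)$, and this already accounts for all positions of $B$. For the positions lying strictly below the run, the key structural point is that the block $B_{k+1}$ immediately to the right of $B_k$ cannot be long --- a block sitting immediately to the left of a long block is always in the copy regime, which would contradict $B_k$ being damped --- so $B_{k+1}$ is a short copy block lying immediately to the left of a long block. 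Its positions strictly below the run number fewer than $\gamma n$ and each has weight at most $w_{\ell_{B_{k+1}}}\le w_{\ell_B}$, contributing less than $\gamma n\,w_{\ell_B}$; and since $B_{k+1}$ is a non-leftmost copy block, $w_{r_{B_{k+1}}}\le n^{-4}w_{\ell_{B_{k+1}}}\le n^{-4}w_{\ell_B}$, so by the first observation all positions lying strictly below $B_{k+1}$ contribute at most $n\,w_{r_{B_{k+1}}}\le n^{-3}w_{\ell_B}$. Summing the three contributions yields exactly $w_{\ell_B}\,(\tfrac{n}{c\ln K}+1+\gamma n+n^{-3})$. (If instead the run reaches bit position $1$, i.e. $B_k$ is the rightmost block, the first bound alone already suffices.)

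The main obstacle is the damped case, and specifically the observation that a maximal damped run can only be terminated from below by a short copy block, never directly by a long block. Without this one could fear a long block carrying $\Theta(n)$ bit positions of weight close to $w_{\ell_B}$ sitting just below $B$, which would make the sum $\Theta(n)\,w_{\ell_B}$ --- far above the target, since $\tfrac1{c\ln K}+\gamma$ is much smaller than $1$ for the constants in play. Once one knows the run is terminated by a short copy block with its built-in $n^4$ weight jump, the three pieces combine to give precisely the stated bound.
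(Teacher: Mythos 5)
Your proof is correct and follows essentially the same route as the paper's: the damped run starting at $B$ is handled by Corollary~\ref{cwsumgeometric}, the short copy block immediately to the left of the nearest long block contributes at most $\gamma n\,w_{\ell_B}$, and the $n^4$ coefficient jump across that copy block together with monotonicity of the $w_j$ yields the $n^{-3}w_{\ell_B}$ tail. The only cosmetic difference is that you case-split on the regime of $B$ itself rather than on whether a long block exists to its right, and you extract the $n^{-4}$ factor from the weight jump inside the copy block instead of from $a_j\le n^{-4}a_{\ell_S}$ for $j$ in the long block; these are equivalent.
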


\begin{proof}
 If there is no long block to the right of $B$, then 
$B$ and all of the blocks to its right are in the damped regime, so
the result follows immediately from 
Corollary~\ref{cwsumgeometric}. Assume therefore that there is a long block to the right of $B$. Let $L$ be the long block which is closest to $B$ on its right. Let $S$ be the short block immediately to the left of $L$. Note that $S$ might be the same block as $B$. 
  
Suppose $j\in L$.
Recall that for all $h, k \in L \cup S$, we have $\frac{w_h}{a_h} = \frac{w_k}{a_k}$. Thus,
since $S$ is not the leftmost block, 
$$w_j = \frac{w_j }{a_j} a_j
 \le \frac{w_j }{a_j} n^{-4} a_{\ell_S} 
 = n^{-4} w_{\ell_S} \le n^{-4} w_{\ell_B}.$$ 
Since the $w_j$'s increase with~$j$,
we conclude that $w_j \leq n^{-4} w_{\ell_B}$ for any $j\leq \ell_L$.
Thus, $\sum_{j \le \ell_L} w_j \le n^{-3} w_{\ell_B}$.
  
 Using the fact that $S$ is short and the monotonicity of $w$, we  deduce $\sum_{j \in S} w_j \le \gamma n w_{\ell_B}$.  Combining this with 
Corollary~\ref{cwsumgeometric},
we obtain \[\sum_{j \le \ell_B} w_j \le w_{\ell_B} \left(\frac{n}{c \ln K} +1 \right)+ \gamma n w_{\ell_B} + w_{\ell_B} n^{-3}.\]
\end{proof}

\section{Feasible Drift}

Our objective in this section is to prove the following lemma, which is the heart of the proof of our main result.

\begin{lemma}
Let $F$ be a linear family of objective functions over
bit~strings. Consider the (1+1) EA for minimising~$F$ with independent bit-mutation
rate $p_n=c/n$. There is a function $\nu(n) = O(n)$ such that
the family $\Phi = \{\Phi_f\}$ of drift functions
constructed above is $\nu$-feasible for the EA.  
\label{lem:gotfeasible}
\end{lemma}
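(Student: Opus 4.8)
The plan is to verify the three conditions of Definition~\ref{def:feasdrift} for the family $\Phi=\{\Phi_f\}$ just constructed, for a suitable $\nu(n)=O(n)$. Conditions~1 and~2 are immediate: $\Phi_f(\mathbf 0)=0$, and since $w_1=1$ and $0<w_1\le w_2\le\cdots\le w_n$, we have $\Phi_f(x)=\sum_{i:x_i=1}w_i\ge w_1=1$ for every $x\neq\mathbf 0$, while $\mathbf 0$ is the unique optimum of a linear function with positive coefficients. Everything lies in the multiplicative drift condition~3: the goal is to produce a constant $c'>0$, depending only on $c$, $\eps$, $K$, $\gamma$, such that
\[
\Delta(x)\;:=\;E\bigl[\Phi_f(x)-\Phi_f(x_{\new})\bigr]\;\ge\;\frac{c'}{n}\,\Phi_f(x)
\]
for every non-optimal~$x$ and every sufficiently large~$n$; then $\nu(n)=n/c'=O(n)$ is as required, and, combined with Lemma~\ref{lem:gotpiecewise} and Lemma~\ref{lem:piecewise}, this yields Theorem~\ref{thm:main}.

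Writing $\Phi_f(x)-\Phi_f(x_{\new})=\sum_i w_i(x_i-x_i')\,\mathbf 1[x'\text{ accepted}]$ and taking expectations gives the exact identity
\[
\Delta(x)\;=\;\sum_{i:\,x_i=1} w_i\,q_i\;-\;\sum_{i:\,x_i=0} w_i\,q_i,
\]
where $q_i=q_i(x)$ is the probability that, in one mutation from~$x$, bit~$i$ flips and the offspring is accepted. The first sum is the gain from turning ones into zeros, the second the loss from turning zeros into ones. The gain is easy to bound from below: since flipping \emph{only} a one-bit is always accepted, $q_i\ge (c/n)(1-c/n)^{n-1}=\Theta(1/n)$ for every $i$ with $x_i=1$, so the first sum is already $\Theta(1/n)\,\Phi_f(x)$. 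The heart of the matter is that the loss must not swallow this gain: I would establish a \emph{per-position} bound, showing that for each one-bit~$i$ the decrease of the partial potential $w_ix_i$, together with position~$i$'s share of the loss generated by zero-bits of weight at most $w_i$ flipping up, is at least $(c'/n)\,w_i$, and then sum over the one-bits of~$x$. In regions where gain and loss are both of order $\Phi_f(x)/n$ one must also credit the gain from multi-bit flips, not just single-bit ones.

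This is where the two regimes, and hence the construction, earn their keep. In a \emph{copy regime} the ratio $w_i/a_i$ is constant over the block, so the acceptance inequality $f(x')\le f(x)$ is \emph{equivalent} to $\Phi_f(x')\le\Phi_f(x)$ as long as all flipped bits lie in one copy-regime block; hence such an accepted move never increases $\Phi_f$ and contributes nothing to the loss. This is the Droste--Jansen--Wegener / He--Yao alignment of acceptance with potential decrease, localised to copy regimes; together with the fact that consecutive blocks are separated by a coefficient ratio at least $n^4$ (so that a zero-bit high in a block to the left forces rejection), it controls the steeply increasing and the copy-dominated parts. In a \emph{damped regime} the weights grow only like $K^{(i-r_B)c/n}$, so the weight of any zero-bit that flips up is at most $K^c$ times the weight at the right end of its block, and — by Lemma~\ref{lwsumgeometric}, Corollary~\ref{cwsumgeometric} and Lemma~\ref{lwsumrightend} — the total weight in a maximal damped stretch of blocks, together with everything to its right, is only $O\!\bigl(n/(c\ln K)\bigr)$ times the leftmost weight of that stretch. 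Thus a damped stretch is essentially flat and behaves like \textsc{OneMax}, for which the drift is positive with a constant depending on~$c$; the loss it produces is a fraction, governed by $1/\ln K$ and~$\gamma$, of the corresponding gain, and the residual harmful events — near-equal coefficients carrying unequal damped weights, or bad flips coupled with compensating $1\to0$ flips, possibly across block boundaries — are of lower order and absorbed into~$c'$ by taking $K$ large and $\gamma$ small, with~$\eps$ measuring the slack. The leftmost block is handled separately, since the $n^4$ gap guarantee fails there, but it carries the largest weights and has nothing above it, so the positive drift from its one-bits is secured directly.

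I expect the main obstacle to be precisely the bookkeeping in the damped regime: quantifying the loss from \emph{simultaneous} bad flips — several zero-bits becoming ones, spread over several blocks, together with their compensating $1\to0$ flips — and showing, uniformly over all~$x$, that it is dominated by the gain. This is exactly the point at which universal drift functions fail for $c>1$, and it is controlled by the interplay of the damping base~$K$, the long/short threshold~$\gamma$, the requirement that consecutive long blocks be separated by at least three short blocks, and the geometric-series estimates of Section~\ref{sec:revlabel}. Once the per-position drift bound is in hand, summing over the one-bits of~$x$ gives $\Delta(x)\ge (c'/n)\,\Phi_f(x)$, hence $\nu$-feasibility with $\nu(n)=n/c'$.
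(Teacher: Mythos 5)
Your setup is right and your structural intuitions (conditions 1 and 2 are immediate; the copy regime aligns acceptance with potential decrease; the $n^4$ gaps force rejection of flips that turn on a zero-bit in a higher block; the damped weight sums are only $O(n/\ln K)$ times the leftmost weight) all match what the paper uses. But the decisive step is missing. Everything hinges on bounding the loss term $\sum_{i:x_i=0} w_i q_i$, and your proposal defers exactly this to an unspecified ``per-position charging'' of losses to one-bits and to ``bookkeeping'' of simultaneous bad flips across blocks --- which is precisely the part you acknowledge as the main obstacle. A charging scheme of the kind you describe is not well-defined as stated (a zero-bit of weight at most $w_i$ can flip up in an accepted move whose compensating one-bits do not include $i$), and without it the argument does not close.

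The mechanism the paper uses instead is to condition on the disjoint events $I_i[t+1]$ that $i$ is the \emph{leftmost} one-bit selected for flipping, and to prove the per-event bound~(\ref{eq:caseeq}): conditioned on $I_i$, the expected net decrease is at least $(1-p_n)^{2n}w_i(1-\eps)$. The key observation (Lemma~\ref{liprime}) is that $I_i$ together with acceptance forces every flipped zero-bit to lie at a position $j\le\ell_L$, so by linearity of expectation the conditional loss is at most $\frac{c}{n}\sum_{j\le\ell_L}w_j$, which Lemma~\ref{lwsumrightend} makes $O(w_i/\ln K)$ --- small against the single-flip gain $w_i(1-p_n)^n$ once $K$ is large. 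No analysis of simultaneous bad flips with compensating flips is needed, and (contrary to your remark) no credit from multi-bit gains is required anywhere. What your sketch also does not address is the translation, at regime boundaries, of the acceptance inequality (stated in the coefficients $a_j$) into an inequality in the weights $w_j$ at the cost of a controlled factor $K^{\gamma c}$; this is the content of the paper's Cases 3--5 (the inequalities around~(\ref{blahblah}) and the auxiliary functions $\Psi(y)$), and it is where the requirement that long blocks be separated by at least three short blocks is actually consumed. In short: correct plan, correct ingredients, but the quantitative core of the proof --- the loss bound and the five-case boundary analysis --- is absent.
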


Consider running the EA with input~$f$ with $n=n(f)$. We use the following notation.
The state after $t$ steps is a binary string $x[t] = x_n[t]\ldots x_1[t]$. 
Recall from Section~\ref{sec:param} that we write bit~strings
as words from most significant bit (``leftmost bit'') to least significant.
In the $(t+1)$'st step of the algorithm, 
the bits of a binary string $y[t+1]=y_n[t+1]\ldots y_1[t+1]$ encoding the mutation mask are chosen independently.
The probability that $y_i[t+1]=1$ is $p_n=c/n$.
Then $x'[t+1]$ is formed from $x[t]$ by flipping the bits that are $1$ in
string~$y[t+1]$.
That is,
$x'_n[t+1] \ldots x'_1[t+1] = (x_n[t]\oplus y_n[t+1])\ldots (x_1[t] \oplus y_1[t+1])$.
Let $A_{t+1}$ be the event
that 
$\sum_i a_i x'_i[t+1] \leq \sum_i a_i x_i[t]$.
We say that the mutation  in step~$t+1$ is ``accepted'' in this case.
If $A_{t+1}$ occurs, then $x[t+1]=x'[t+1]$.
Otherwise, $x[t+1]=x[t]$. Of course, the coefficients~$a_i$, and therefore $A_{t+1}$ itself,
depends implicitly on~$f$.
Suppose that $x[t]$ is not the all-zero string.  For a bit position~$i$
with $x_i[t]=1$,
let $I_i[t+1]$ be the event
$$y_i[t+1]=1 \wedge \forall j\in \{i+1,\ldots, n\}: 
(x_j[t]=1) \Rightarrow
(y_j[t+1]=0).$$
 $I_i[t+1]$ is the event that $i$ is the leftmost `1'
to be considered for a flip in step~$t+1$.
Finally, let $I'_\ell[t+1]$ be the event 
$$\forall j\in \{\ell+1,\ldots, n\}:
(x_j[t]=0) \Rightarrow
(y_j[t+1]=0)
.$$
$I'_\ell[t+1]$ is the event that the `0' bits to the left of $\ell$ are not considered for a flip in step~$t+1$.
Note that $\Pr(I'_\ell[t+1])\geq {(1-p_n)}^n$ and that, given~$x[t]$, the event $I'_\ell[t+1]$ is independent of $I_i[t+1]$
for any~$i$
(the event $I_i[t+1]$ constrains $y_j[t+1]$ for some $j$ with $x_j[t]=1$,
whereas the event $I'_\ell[t+1]$ constrains $y_j[t+1]$ for~$j$ with $x_j[t]=0$). 
However, these events are not independent if we condition on $A_{t+1}$,
as the following simple observation shows.
  
\begin{lemma}\label{liprime}
  Let $i$ be a bit position contained in some block $B$. Assume that there is a block $L$ immediately to the left of $B$. Then $I_i[t+1]$ and $A_{t+1}$ implies $I'_{\ell_L}[t+1]$.
\end{lemma}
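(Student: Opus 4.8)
The plan is to argue by contradiction, translating the acceptance event $A_{t+1}$ into an inequality between the coefficients of the flipped one-bits and the flipped zero-bits, and then invoking the sharp separation $a_{\ell_L}/a_{r_L}\ge n^4$ guaranteed by the block construction. Recall first that consecutive blocks overlap in exactly one bit position, so since $L$ lies immediately to the left of $B$ we have $r_L=\ell_B$; and since $i\in B$ we have $i\le\ell_B=r_L$. Next, observe that $\sum_k a_k x'_k[t+1]-\sum_k a_k x_k[t]=\sum_{k:\,y_k[t+1]=1,\,x_k[t]=0}a_k-\sum_{k:\,y_k[t+1]=1,\,x_k[t]=1}a_k$, so $A_{t+1}$ is precisely the assertion that the first of these two sums is at most the second.

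There is one boundary case to dispose of first: if $L$ is the leftmost block, then $\ell_L=n$, the index set $\{\ell_L+1,\dots,n\}$ is empty, and $I'_{\ell_L}[t+1]$ holds unconditionally, so the claimed implication is trivial. Assume henceforth that $L$ is not the leftmost block, so that $a_{\ell_L}/a_{r_L}\ge n^4$. Suppose, for a contradiction, that $I_i[t+1]$ and $A_{t+1}$ both occur but $I'_{\ell_L}[t+1]$ fails. The failure of $I'_{\ell_L}[t+1]$ provides some $j\ge\ell_L+1$ with $x_j[t]=0$ and $y_j[t+1]=1$, so, since the coefficients are positive, the left-hand sum above is at least $a_j\ge a_{\ell_L}$. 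On the other hand, $I_i[t+1]$ forces $y_k[t+1]=0$ for every $k>i$ with $x_k[t]=1$, so every term of the right-hand sum has index $k\le i\le\ell_B$; as the coefficients are non-decreasing, that sum is at most $\sum_{k=1}^{i}a_k\le n\,a_{\ell_B}=n\,a_{r_L}$. Combining these two bounds with $A_{t+1}$ yields $a_{\ell_L}\le n\,a_{r_L}$, i.e.\ $a_{\ell_L}/a_{r_L}\le n<n^4$, contradicting the block bound. Hence $I'_{\ell_L}[t+1]$ must hold.

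I do not anticipate a genuine obstacle here — as the surrounding text suggests, this is a ``simple observation'' — but the one place requiring care is precisely the boundary case $\ell_L=n$, where the inequality $a_{\ell_L}/a_{r_L}\ge n^4$ is unavailable and one must instead note that $I'_{\ell_L}[t+1]$ is vacuously satisfied. A secondary point worth making explicit is that the coefficients are strictly positive and non-decreasing (Definition~\ref{def:lin}), which is exactly what lets us simultaneously lower-bound the zero-bit sum by the single term $a_j$ and upper-bound the one-bit sum by $n\,a_{\ell_B}$.
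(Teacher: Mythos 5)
Your proof is correct and follows essentially the same route as the paper's: assume $I'_{\ell_L}[t+1]$ fails, pick a flipped zero-bit at position $j>\ell_L$, bound the gain from flipped one-bits by $n\,a_{r_L}$ using $i\le\ell_B=r_L$, and contradict acceptance via $a_{\ell_L}\ge n^4 a_{r_L}$. The paper likewise dismisses the leftmost-block case as trivial, so your explicit handling of that boundary case is a faithful (and slightly more careful) rendering of the same argument.
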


\begin{proof}
  There is nothing to show if $L$ is the leftmost block. Hence assume that it is not. Then in particular, $a_{\ell_L} \ge n^4 a_{r_L}$. 
  
  Assume that  
  $I_i[t+1]$ occurs and $I'_{\ell_L}[t+1]$ does not. Let $k > \ell_L$ be such that $y_k[t+1] = 1$ and $x_k[t] = 0$. Then $\sum_{j = 1}^n a_j (x_j[t+1] - x_j[t]) \geq  a_k -  \sum_{j \le i} a_j 
  \geq a_k - n a_i > 0$, because $a_k \ge a_{\ell_L} \ge n^4 a_{r_\ell} \ge n^4 a_i$. Hence this mutation is not accepted, that is, $A_{t+1}$ does not occur.
\end{proof}

Recall that $\eps\in(0,1)$, $K$, and $\gamma$ are parameters defined in Section~\ref{sec:param}.
We take $\eps$ to be ``sufficiently small''. Then $K\geq 1$ is taken to be ``sufficiently large''
(depending on~$c$ and~$\eps$) and then $\gamma\in(0,1) $ is taken to be ``sufficiently small''
(depending on~$c$, $\eps$ and $K$).
Finally, we take $n_0>1$ to be any integer which is ``sufficiently large'' with respect to 
all of these parameters.
The actual constraints that we use 
(to determine what is ``sufficiently large'' and what is ``sufficiently small'') will be spelled out below.
Note that $(1-\frac{c}{n})^{n}$ approaches $\exp (-c)$ from below as $n\rightarrow \infty$.
We choose $n_0$ so that $(1-\frac{c}{n_0})^{n_0}$ is ``sufficiently close'' to 
$\exp (-c)$ (with respect to $c$, $\eps$ and $K$).
We can conclude from this that $(1-\frac{c}{n})^{n}$ is ``sufficiently close'' to $\exp(-c)$ for
any $n\geq n_0$. Similarly, 
$(1-\frac{c}{n})^{3n}$ approaches $\exp (-3c)$ from below as $n\rightarrow \infty$.
We will choose $n_0$ to ensure that, for $n\geq n_0$, this
is ``sufficiently close'' to $\exp(-3c)$.

\begin{proof}[Proof of Lemma~\ref{lem:gotfeasible}]
The first two conditions in Definition~\ref{def:feasdrift} follow from the construction of~$\Phi_f$ in Section~\ref{sec:defPhif}.
The third condition follows from Lemma~\ref{lem:drift} below. 
\end{proof}

The following lemma is the main ingredient in the short proof of Lemma~\ref{lem:gotfeasible} above.
It establishes the third condition in Definition~\ref{def:feasdrift}, so it allows us to conclude that $\Phi$ is $\nu$-feasible
for the EA. Since by Lemma~\ref{lem:gotpiecewise}, $\Phi$ is also piece-wise polynomial with respect to the EA,
Lemma~\ref{lem:piecewise} enables us to repeatedly apply Lemma~\ref{lem:drift} to bound
the expected optimisation time of the EA.

\begin{lemma}\label{lem:drift}
Let $F$ be a linear family of objective functions over
bit~strings. Consider the (1+1) EA for minimising~$F$ with independent bit-mutation
rate $p_n=c/n$. 
Let~$f$ be an objective function in~$F$    with $n(f)\geq n_0$.
For all $x \in \{0,1\}^{n(f)} \setminus \{\bf 0\}$,
\[E[\Phi_f(x[t+1]) \mid x[t]=x] \leq 
\left(1-\tfrac1{n(f)} c e^{-3c}{(1-\eps)}^2\right) \Phi_f(x).\]
\end{lemma}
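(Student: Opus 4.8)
The plan is to fix a non-zero state $x=x[t]$ and bound the expected decrease $E[\Phi_f(x)-\Phi_f(x[t+1])\mid x[t]=x]$ from below by $\frac{1}{n}ce^{-3c}(1-\eps)^2\Phi_f(x)$. First I would decompose the change in $\Phi_f$ according to the event $I_i[t+1]$ that bit position $i$ (with $x_i=1$) is the leftmost $1$-bit considered for a flip. Since $\Phi_f(x)=\sum_i w_i x_i$ with all $w_i>0$, and since every accepted move flips this leftmost $1$, the dominant positive contribution to the drift comes from the event ``$I_i[t+1]$ occurs, the move is accepted, and bit $i$ is actually set to $0$''. The key quantitative fact is $\Pr(I_i[t+1])= p_n(1-p_n)^{|\{j>i\,:\,x_j=1\}|}\ge p_n(1-p_n)^{n-1}\approx (c/n)e^{-c}$, and if we additionally insist that all $0$-bits to the left of the relevant block boundary are not flipped, we pick up a further factor $(1-p_n)^n\approx e^{-c}$ (this is where Lemma~\ref{liprime} comes in: on $I_i\cap A_{t+1}$ with block $B\ni i$ having a left-neighbour block $L$, the event $I'_{\ell_L}$ is automatic, so conditioning on it costs nothing). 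So informally the ``good event'' for position $i$ has probability at least roughly $(c/n)e^{-2c}$, contributing $w_i$ to the drift; summing over $i$ with $x_i=1$ gives a main term like $(c/n)e^{-2c}\Phi_f(x)$, and the $e^{-3c}(1-\eps)^2$ in the statement is the honest version after we pay for the losses below.

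The second step is to control the two sources of loss: (a) a move where bit $i$ is the leftmost $1$ considered, but additional bits are flipped so that $i$ does \emph{not} end up $0$ or, worse, the move is still accepted but \emph{increases} $\Phi_f$ because many $0$-bits to the right of $i$ were turned into $1$-bits; and (b) the interaction between blocks — a flip far to the right that is accepted because the coefficients $a_j$ are tiny there, yet the weights $w_j$ are comparatively large (in the damped regime the weights are deliberately \emph{larger} than $a_j/a_{r_B}$ would dictate would be, they are capped, but they can still be much larger than the coefficient ratio, which is exactly the danger). For (a), I would use that conditioned on $I_i[t+1]$ and acceptance, the expected increase from bits to the right of $i$ that flip $0\to1$ is small: each such bit $j<i$ flips with probability $c/n$ independently, contributing $w_j$, so the expected gain is at most $(c/n)\sum_{j<i}w_j$, and here the geometric-series bounds of Section~\ref{sec:revlabel} (Lemma~\ref{lwsumgeometric}, Corollary~\ref{cwsumgeometric}, Lemma~\ref{lwsumrightend}) are exactly what is needed to show $(c/n)\sum_{j\le \ell_B}w_j$ is a small constant times $w_{\ell_B}$, hence negligible against the $w_i$ we gained. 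For (b), the fact that non-leftmost blocks satisfy $a_{\ell_B}/a_{r_B}\ge n^4$ means a flip that reaches into a block to the left is rejected unless enough is fixed up, so cross-block contributions are suppressed by polynomial factors in $n$; this is the role of Lemma~\ref{liprime} and the block construction.

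The hardest part, and where I would spend most of the effort, is case (b) combined with the long/short block distinction: in a \emph{long} block (or the short block immediately to its left) we are in the copy regime $w_i=w_{r_B}a_i/a_{r_B}$, so the weights track the coefficients faithfully and the analysis is like the classical linear-function drift argument restricted to that block — one shows that within a long block, over $\Theta(\gamma n)$ bit positions, the $\Theta(1/n)$-per-step drift accumulates correctly and the constant $\gamma$ being small enough prevents the "spill" across the block boundary from eating the gain. In a sequence of \emph{short} blocks in the damped regime, instead, the capped weights $w_i=w_{r_B}\min\{K^{(i-r_B)c/n},a_i/a_{r_B}\}$ give a geometric profile, and the drift argument there is the one genuinely new ingredient: the cap $K^{(i-r_B)c/n}$ ensures that even when the coefficients jump wildly (so acceptance is governed by the leftmost flipped bit), the weights only grow by a bounded factor $K^{c}$ across a block, so the expected increase from lower-order bits flipping up is at most a $1/\ln K$ fraction of the gain, which is $\le \eps$ once $K$ is large. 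Assembling these, I would split $\Phi_f(x)=\sum_i w_i x_i$ by the block containing the leftmost $1$-bit of $x$ (only that block and those to its left matter), lower-bound the per-position good-event probability by $(c/n)e^{-2c}$ (further reduced to $e^{-3c}$ to absorb the $I'_{\ell_L}$ conditioning and a $(1-p_n)^n$-type slack), upper-bound all loss terms by $2\eps$ times the gain using the Section~\ref{sec:revlabel} lemmas and the $n^4$ block gap, and conclude $E[\Phi_f(x)-\Phi_f(x[t+1])]\ge \frac1n c e^{-3c}(1-\eps)^2\Phi_f(x)$, which is exactly the claimed inequality. I expect the bookkeeping for the boundary bit positions (where a block and the one to its left overlap) and the separate treatment of the leftmost block to be the fiddliest routine parts.
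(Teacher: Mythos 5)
Your plan follows the paper's strategy: decompose the drift over the disjoint events $I_i[t+1]$, lower-bound each conditional term by roughly $(1-p_n)^{2n}w_i(1-\eps)$, pay one further factor $(1-p_n)^n$ for $\Pr(I_i[t+1])$ (three factors in total, whence $e^{-3c}$), and control the loss from $0\to1$ flips using Lemma~\ref{liprime} together with the geometric-sum bounds of Section~\ref{sec:revlabel}. The probability accounting and the identification of the two loss mechanisms are both right, and for bit positions far from any long block this plan does turn into the paper's Case~1 with only routine work.

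The genuine gap is precisely the part you flag as hardest: verifying the per-position bound when $i$ lies in, or adjacent to, a long block (the paper's Cases 3--5). Your description of the long-block case --- that ``over $\Theta(\gamma n)$ bit positions, the $\Theta(1/n)$-per-step drift accumulates correctly'' --- is not an argument that can be executed here, because the lemma is a single-step statement and nothing accumulates. What actually makes these cases work is a per-mutation transfer of the acceptance condition $\sum_j a_j(x'_j-x_j)\le 0$ into a lower bound on $\sum_j w_j(x_j-x'_j)$: on the copy-regime blocks the ratio $w_j/a_j$ is constant, so the condition transfers exactly, while on the damped block to the right one has $a_j\le (a_{r_B}/w_{r_B})\,w_j K^{\gamma c}$, so the error is a $K^{\gamma c}$ factor on a sum that Lemma~\ref{lwsumgeometric} makes small. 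The paper packages this as a function $\Psi(y)\le 0$ that lower-bounds the weight change for every \emph{accepted} mutation mask $y$, observes the same bound holds trivially for rejected masks (where the change is $0$), and only then takes expectations, thereby avoiding having to condition on the awkward event $A_{t+1}$. None of this machinery appears in your plan, and without it the mismatch between ``acceptance is decided by the $a_j$'' and ``drift is measured in the $w_j$'' is exactly where a naive argument breaks. A smaller slip: on $I_i[t+1]\cap A_{t+1}$ bit $i$ always ends up $0$ (it was selected for flipping and the move was accepted), so your loss scenario ``additional bits are flipped so that $i$ does not end up $0$'' cannot occur; the only losses are rejected moves (contributing $0$, which merely costs you the factor $\Pr(A_{t+1}\mid\cdot)\ge(1-p_n)^n$) and $0\to1$ flips to the right of $i$ in accepted moves.
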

 
\begin{proof} 

Fix $f\in F$ with $n(f)\geq n_0$. Let $n=n(f)$.
Note that, for any fixed $x[t]$,
\begin{equation}\label{eqstart}
E[\Phi_f(x[t]) - \Phi_f(x[t+1])]  =
\sum_{i: x_i[t]=1}
\Pr(I_i[t+1])
E[\Phi_f(x[t]) - \Phi_f(x[t+1])\mid I_i[t+1]],
\end{equation}
since the events $I_i[t+1]$ for 
$1\leq i \leq n$ are disjoint  and 
$\Phi_f(x[t]) = \Phi_f(x[t+1])$ unless one of them occurs.
In each of various cases (see Subsections~\ref{sec:case1} to~\ref{sec:caselast}),
we will show that,  for all $i$ with $x_i[t]=1$,
\begin{equation}
\label{eq:caseeq}
E[\Phi_f(x[t]) - \Phi_f(x[t+1])\mid I_i[t+1]]  \geq (1-p_n)^{2n} w_i (1 - \eps),
\end{equation}
which is greater than or equal to~$0$ since $n\geq n_0>c$ and $\eps<1$.
Using the lower bound
 $\Pr(I_i[t+1]) \geq p_n (1-p_n)^n $, which applies for every~$i$ with $x_i[t]=1$,
 Equations~(\ref{eqstart})
 and~(\ref{eq:caseeq}) give
 
 \begin{align*}
E[\Phi_f(x[t]) - \Phi_f(x[t+1])]  
&\geq  p_n (1-p_n)^n 
\sum_{i:x_i[t]=1}
E[\Phi_f(x[t]) - \Phi_f(x[t+1])\mid I_i[t+1]] \\
&\geq  p_n (1-p_n)^n (1-p_n)^{2n}  (1 - \eps) \Phi_f(x[t]),
\end{align*}
so 
$$E[\Phi_f(x[t+1])] \leq (1 - p_n   (1-p_n)^{3n}  (1 - \eps) 
) \Phi_f(x[t]).$$
Since $(1-p_n)^{3n} \ge e^{-3c}(1-\eps)$ for $n\geq n_0$, 
this will complete the proof.

It remains to prove Equation (\ref{eq:caseeq}). We do this in 
Subsection~\ref{sec:case1} to~\ref{sec:caselast}.
In each case, $B$ is the block containing bit position~$i$, $L$ is the block to
the left of~$B$ (if it exists) and $R$ is the block to the right of~$B$ (if it exists).
Figure~\ref{fig:one} depicts some blocks (two short blocks  
followed by a long block, followed by 
a short block divided into two miniblocks, followed
by another short block).
For each possible location of the bit position~$i$, it names the relevant case.
Every long block is covered by Case~5. 
Blocks to the left of a long block are covered by Case~3 and 
  blocks
immediately to the right 
of a long block
are covered by Case~4, then Case~2. Everything else is covered by Case~1.
 
\begin{figure}

\centering{
 
\begin{tikzpicture} 
\draw(0,0) -- (14,0)--(14,1)--(0,1)--(0,0);
\draw [style=dashed] (11,0) -- (11,1); 
\draw (2,0) -- (2,1);  
\draw (4,0) -- (4,1);  
\draw (10,0) -- (10,1);  
\draw (12,0) -- (12,1);  
\draw (1,0.5) node {\tiny Case $1$};
\draw (3,0.5) node {\tiny Case $3$};
\draw (7,0.5) node {\tiny Case $5$};
\draw (10.5,0.5) node {\tiny Case $4$};
\draw (11.5,0.5) node {\tiny Case $2$};
\draw (13,0.5) node {\tiny Case $1$};
\end{tikzpicture}
}

\caption{The cases that are used to proof Equation (\ref{eq:caseeq}).}
\label{fig:one}
\end{figure}

\end{proof} 
  
For all of the following cases,
fix $f\in F$ with $n(f)\geq n_0$. Let $n=n(f)$. 
Fix $x[t]$ with $x_i[t]=1$ for a bit position~$i$ in block~$B$. 
Recall from the proof of Lemma~\ref{lem:drift} that the goal is to prove~(\ref{eq:caseeq}).
That is, we must show that
$$
 E[\Phi_f(x[t]) - \Phi_f(x[t+1])\mid I_i[t+1]]  \geq (1-p_n)^{2n} w_i (1 - \eps).$$

\subsection{Case 1}
\label{sec:case1}

For this case, assume that $B$ is not long and that blocks adjacent to~$B$ are not long either.

If $B$ is not the leftmost block, then let $L$ be the block to $B$'s left.
The case in which $B$ is the leftmost block is actually easier, but to avoid
repetition, in this case, let $L$ be the \emph{block} consisting of 
the single bit position $\ell_B$. The following argument now applies whether~$L$ is a real block or just
a single bit position.

We will condition on $I_i[t+1]$. By Lemma~\ref{liprime}, we 
know that if this mutation is accepted (so $A_{t+1}$ occurs), then the event
$I'_{\ell_L[t+1]}$ occurs. Also,
$\Pr(I'_{\ell_L}[t+1] \mid I_i[t+1]) \geq (1-p_n)^n$, as we noted earlier.
Thus 
$E[\Phi_f(x[t]) - \Phi_f(x[t+1])\mid I_i[t+1]] $
is equal to
\begin{equation}
   \Pr(I'_{\ell_L}[t+1] \mid I_i[t+1]) \cdot E[\Phi_f(x[t]) - \Phi_f(x[t+1])\mid I_i[t+1], I'_{\ell_L}[t+1]].
\label{7July11}
\end{equation}
Let $P = \Pr( A_{t+1}
\mid I_i[t+1],I'_{\ell_L}[t+1])$. Note that $P \geq {(1-p_n)}^n$
(since, for example, $A_{t+1}$ occurs  
if $y_j[t+1] = 0$ for $j\neq i$).
Now 
$\Phi_f(x[t]) - \Phi_f(x[t+1]) = \sum_{j=1}^n w_j(x_j[t]-x_j[t+1])$.
If $I_i[t+1]$ and $I'_{\ell_L}[t+1]$ occur, then this is 
$\sum_{j\leq \ell_L} w_j(x_j[t]-x_j[t+1])$.
If $A_{t+1}$ also occurs, then $x_i[t]-x_i[t+1]=1$ so this is
$w_i + \sum_{j\leq \ell_L,j\neq i} w_j(x_j[t]-x_j[t+1])$.
Thus,  the quantity in (\ref{7July11}) is at least
\begin{align*}
& (1-p_n)^n \left(w_i P - \sum_{j\leq \ell_L,j\neq i} w_j \Pr(y_j[t+1]=1 \mid I_i[t+1], I'_{\ell_L}[t+1])\right)\\
& \geq
(1-p_n)^n \left(w_i (1-p_n)^n - \sum_{j\leq \ell_L} w_j p_n\right).
\end{align*}

Now, by Lemma~\ref{lwsumrightend}, we have
\[\sum_{j \le \ell_L} w_j \le K^{2 c \gamma} w_{r_B} \left(\frac{2 n}{c \ln K} + 2 + \gamma n + n^{-3}\right).\]
To see this, apply the lemma directly to $L$ if it is not the leftmost block
(and note that $w_{\ell_L} \leq K^{2 c \gamma} w_{r_B} $). 
If $L$ is the leftmost block (and $B$ is not)
then
apply Lemma~\ref{lwsumrightend} to block~$B$ 
(noting that $w_{\ell_B} \leq K^{c \gamma} w_{r_B}$)
and use Lemma~\ref{lwsumgeometric} to sum the weights in~$L$.
Finally, if $B$ is the leftmost block then 
apply Lemma~\ref{lwsumrightend} to the short block
to the right of~$B$ and use Lemma~\ref{lwsumgeometric} to sum the weights in~$B$.

Using this and $w_{r_B}\leq w_i$  we have 
\begin{align*}
& E[\Phi_f(x[t]) - \Phi_f(x[t+1])\mid I_i[t+1]] \\
& \geq
(1-p_n)^n w_i 
\left(
(1-p_n)^n 
-  \frac{2 K^{2 c \gamma} }{ \ln K} 
- 2 \frac  c n K^{2 c \gamma}
-  \gamma c K^{2 c \gamma}
- \frac{c}{n^4}K^{2 c \gamma}
\right).
\end{align*}

By the choice of the parameters in Section~\ref{sec:param},
and since $n\geq n_0$,
each of 
$\frac{2 K^{2 c \gamma} }{ \ln K} $,
$2 \frac  c n K^{2 c \gamma}$,
$\gamma c K^{2 c \gamma}$
and $\frac{c}{n^4}K^{2 c \gamma}$
is at most 
$ 
(1-p_n)^n \eps/4$, so Equation~(\ref{eq:caseeq}) holds, as required.
 To see this, recall (from the text just after Lemma~\ref{liprime} )
that   $\eps$ is taken to be ``sufficiently small'', then $K\geq 1$ is taken to be ``sufficiently large''
(depending on~$c$ and~$\eps$) and then $\gamma\in(0,1) $ is taken to be ``sufficiently small''
(depending on~$c$, $\eps$ and $K$).
Finally, we take $n_0>1$ to be any integer which is ``sufficiently large'' with respect to 
all of these parameters, in particular, guaranteeing that  $(1-p_n)^{n}$ is ``sufficiently close'' to $\exp(-c)$ for
any $n\geq n_0$.  
It is easy to see that $\frac{c}{n^4}K^{2 c \gamma}$
and  $2 \frac  c n K^{2 c \gamma}$ are sufficiently small, since $n_0$ is chosen after the other parameters
(so these terms can be made arbitrarily small as compared to $\exp(-c)\eps/4$).
Similarly, $\gamma c K^{2 c \gamma}$ is sufficiently small because $\gamma$ is chosen  
to be sufficiently small with respect to $\eps$, $c$ and $K$.
Finally, $\frac{2 K^{2 c \gamma} }{ \ln K} $
is sufficiently small because $\gamma$ can be chosen   as small as we like with respect to
the other parameters. 
(That is, \emph{first} $K$ is made sufficiently large with respect to~$c$ and $\eps$ and \emph{then} $\gamma$ is
defined.) 
For example, setting $\gamma = \ln(\tfrac\eps{16}e^{-c} \ln K)/(2 c \ln K)$  gives 
$\frac{2 K^{2 c \gamma} }{ \ln K} = e^{-c} \eps /8$.

\subsection{Case 2}
For this case, assume that the block~$L$, immediately to the left of~$B$, is long,
and that $i$ is in the rightmost miniblock of block~$B$ (which is therefore short).
  
This is very similar to Case~1. As in Case~1, we will condition 
on~$I_i[t+1]$. Where Case~1 uses Lemma~\ref{liprime}, we use exactly the same argument to
show that, if this mutation is accepted (so $A_{t+1}$ occurs), then event
$I'_{\ell_B[t+1]}$ occurs. 
From that point the argument proceeds exactly as in Case~1, replacing ``$\ell_L$'' with
``$\ell_B$''. We use Lemma~\ref{lwsumrightend}
to obtain the upper bound
\begin{align*}
\sum_{j \le \ell_B} w_j 
&\le   w_{\ell_B} \left(\frac{ n}{c \ln K} + 1 + \gamma n + n^{-3}\right)\\
&\le   K^{c \gamma} w_{r_B} \left(\frac{ n}{c \ln K} + 1 + \gamma n + n^{-3}\right).
\end{align*}
The rest of the argument is exactly the same as in Case~1.

\subsection{Case 3}

For this case, assume that $B$ is immediately to the left of a long block~$R$. 
Hence both $B$ and $R$ are in the copy regime.

If $B$ is not the leftmost block, then there is a block $L$ immediately to the left of $B$. Block~$L$ is short, since any pair of long blocks has at three short blocks between.
Thus, $L$ is in the damped regime. If $B$ is the leftmost block, to keep notation simple, we add an artificial block $L = \{\ell_B\} = \{n\}$.

 Note that
\begin{align}
\label{eqCase3}
\sum_{j< r_{R}} w_j
\leq n w_{\ell_{R}} \frac{w_{r_R}}{w_{\ell_R}}
= n w_{\ell_{R}} \frac{a_{r_R}}{a_{\ell_R}}
\leq n^{-3} w_i.
\end{align}
   
Let $Y$ be the set of $n$-bit binary strings so that, if $y[t+1]=y$, then
$I_i[t+1]$ occurs and
$A_{t+1}$ occurs (the move in step~$t+1$ is accepted).
We first analyse the effect of such a mutation. Let $y \in Y$.
As in Case~1, $A_{t+1}$ implies $I'_{\ell_L}[t+1]$. Consequently, we have
   $y_j=0$ for all $j$ that fullfill $j>\ell_L$ or both
   $j>i$ and $x_j[t]=1$.
Thus, by the definition of $A_{t+1}$, we have
\begin{equation*}
\sum_{j\leq \ell_L} a_j ((x_j[t]\oplus y_j) - x_j[t]) \leq 0.
\end{equation*}
We compute
 $$
\sum_{j\in L: y_j=1,x_j[t]=0} a_j +
\sum_{j\in B\cup R: y_j=1,x_j[t]=0} a_j -
\sum_{j \in B \cup R: y_j=1,x_j[t]=1} a_j \leq 
\sum_{j<r_R} a_j
 \leq n a_{r_R}
\leq n^{-3} a_i.$$
Dividing through by $a_i$, we have
$$
\sum_{j\in L: y_j=1,x_j[t]=0} \frac{a_j}{a_i} +
\sum_{j\in B\cup R: y_j=1,x_j[t]=0} \frac{a_j}{a_i} -
\sum_{j \in B \cup R: y_j=1,x_j[t]=1} \frac{a_j}{a_i} \leq 
n^{-3} .$$ Now for $j$ in the copy regime (blocks~$B$ and~$R$), $
a_j/a_i = w_j/w_i$.
Also, for $j\in L$ (which is in the damped regime), 
$$w_j = w_{r_L} \min(K^{(i-r_L)c/n},a_j/a_{r_L})
\leq w_{r_L} \frac{a_j}{a_{r_L}} = w_{r_L} \frac{a_j}{a_i} \frac{w_i}{w_{r_L}} = w_i \frac{a_j}{a_i},$$
so
$a_j/a_i\geq w_j/w_i$.
Hence, replacing $a_j/a_i$ with $w_j/w_i$ and multiplying through by $w_i$, we have
\begin{equation}
\label{blahblah}
\sum_{j\in L: y_j=1,x_j[t]=0} w_j +
\sum_{j\in B\cup R: y_j=1,x_j[t]=0} w_j -
\sum_{j \in B \cup R: y_j=1,x_j[t]=1} w_j  
\leq n^{-3} w_i.
\end{equation}

For the mutation being random (but conditioning on $I_i[t+1]$ and $I'_{\ell_L}[t+1]$), we compute the following.
Let
$E_1 = E\left[\sum_{j \in L \cup B \cup R} w_j (x_j[t]-x_j[t+1])\mid I_i[t+1], I'_{\ell_L}[t+1]\right]$.
Then
\begin{align*}
& E[\Phi_f(x[t])  - \Phi_f(x[t+1])\mid I_i[t+1], I'_{\ell_L}[t+1]] \\
& = E_1 + 
\sum_{j <r_R} w_j
E[x_j[t]-x_j[j+1]
\mid I_i[t+1], I'_{\ell_L}[t+1]
],
\end{align*}
which is at least
$ E_1
 - n^{-3} w_i$ by (\ref{eqCase3}).
Also, 
$$
   E_1=
\sum_{y \in Y} \Pr(y[t+1]=y \mid I_i[t+1], I'_{\ell_L}[t+1])
\sum_{j \in L \cup B \cup R} w_j 
(x_j[t]-(x_j[t]\oplus y_j)).
 $$
 
Each $y$ with $y_j=0$ for all $j\neq i$
contributes at least $(1-p_n)^n w_i$ to the outer sum.
All other strings $y$ contribute at least $-n^{-3} w_i$ by~(\ref{blahblah}).
Now, putting it together, we 
find that 
\begin{align*}
E[\Phi_f(x[t]) &- \Phi_f(x[t+1])\mid I_i[t+1]]  \\
&= \Pr(I'_{\ell_L}[t+1] \mid I_i[t+1]) E[\Phi_f(x[t]) - \Phi_f(x[t+1])\mid I_i[t+1], I'_{\ell_L}[t+1]]\\
&\geq \Pr(I'_{\ell_L}[t+1] \mid I_i[t+1]) (E_1 - n^{-3} w_i)\\
&\geq \Pr(I'_{\ell_L}[t+1] \mid I_i[t+1]) ( (1-p_n)^n w_i-n^{-3} w_i - n^{-3} w_i)\\
&\geq (1-p_n)^n  ( (1-p_n)^n w_i-n^{-3} w_i - n^{-3} w_i).\\
\end{align*}

Now,
$2n^{-3} \leq \eps (1-p_n)^n$,
so  we have established Equation~(\ref{eq:caseeq}), as required.

\subsection{Case 4}
 
For this case, assume that 
the block~$L$, immediately to the left of~$B$, is long,
and that $i$ is in the leftmost miniblock of block~$B$ (which is short).
 
Let $Y$ be the set of $n$-bit binary strings so that, if $y[t+1]=y$, then
$I_i[t+1]$ occurs and
$A_{t+1}$ occurs (the move in step~$t+1$ is accepted).
As in Case~3, $A_{t+1}$ implies $I'_{\ell_L}[t+1]$. Hence 
for every $y\in Y$ we have
   $y_j=0$ if $j>\ell_L$ or if
   $j>i$ and $x_j[t]=1$.
Thus, if $y\in Y$, then, by the definition of $A_{t+1}$, we have
\begin{equation}
\label{myeq1}
0 \leq \sum_{j\leq \ell_L} a_j (x_j[t] - (x_j[t]\oplus y_j)).
\end{equation} 
To derive an upper bound in the right-hand side of Equation~(\ref{myeq1}) we
split the summation into three easily-bounded parts.
The summation over $j\in L-\{r_L\}$ is equal to $ - \sum_{r_L<j\leq \ell_L: y_j=1} a_j $,
the summation over $j\in B$ is at most  $
\sum_{j\in B: x_j[t]=1,y_j=1} a_j$, and
the summation over $j<r_B$ is at most  $ n a_{r_B} \leq a_i/n$.
From~(\ref{myeq1}) we thus have
\begin{equation}
\sum_{j\in L - \{r_L\}: y_j=1} a_j \leq
  \sum_{j\in B: x_j[t]=1,y_j=1} a_j
 + 
  a_i/n.
\label{myeq2}
\end{equation}
 
Define
$$\Psi(y) = 
- (1 + \tfrac 1n - K^{-\gamma c}) K^{\gamma c}w_i - K^{\gamma c} \sum_{j \le \ell_B, j \neq i, y_j=1} w_j.$$
We will show that, for $y\in Y$,
$$\sum_{j \in [n]} 
w_j (x_j[t] - (x_j[t]\oplus y_j)) \geq \Psi(y).$$
 Start by breaking up the left-hand side as
\begin{equation} 
-\sum_{j \in L-\{r_L\}: y_j=1}\!w_j  
+ \sum_{j\in B: x_j[t]=1,y_j=1}\!w_j
- \sum_{j\in B: x_j[t]=0,y_j=1}\!w_j
+ \sum_{j<r_B} w_j (x_j[t] - (x_j[t]\oplus y_j)). 
\label{myeq3} 
\end{equation}

Recall that for $j \in L$, we have $w_j = \frac{w_{r_L}}{a_{r_L}} a_j$, whereas for $j \in B$, we have 
$$a_j \le a_{r_L} = \frac{a_{r_L}}{w_{r_L}} w_{r_L} \leq  
\frac{a_{r_L}}{w_{r_L}} w_{r_B} K^{(r_L - r_B)c/n} \le  
\frac{a_{r_L}}{w_{r_L}} w_j K^{\gamma c},$$
where the final inequality uses the fact that $B$ is short, that is, $r_L - r_B \leq \gamma n$.  

Thus, the sum of the first two terms in~(\ref{myeq3}) is at least
$$  
-\frac{w_{r_L}}{a_{r_L}} 
\sum_{j \in L-\{r_L\}: y_j=1} a_j  
+ 
\frac{w_{r_L}}{a_{r_L}} K^{-\gamma c}
\sum_{j\in B: x_j[t]=1,y_j=1} a_j,$$
and by~(\ref{myeq2}), this is at least
$$ \left(
-\frac{w_{r_L}}{a_{r_L}} (1-K^{-\gamma c})
\sum_{j\in B: x_j[t]=1,y_j=1} a_j \right)
- 
\frac{w_{r_L}}{a_{r_L}}
\frac{a_i}{n},$$
which 
is at least 
$$ 
- 
\frac{w_{r_L}}{a_{r_L}}
\frac{a_i}{n}
-\frac{w_{r_L}}{a_{r_L}} (1-K^{-\gamma c})
a_i
-\frac{w_{r_L}}{a_{r_L}} 
\sum_{j\in B-\{i\}: x_j[t]=1,y_j=1} a_j.
$$
Upper-bounding $a_j$ with 
$\frac{a_{r_L}}{w_{r_L}} w_j K^{\gamma c} $ in the last term, 
we find that~(\ref{myeq3}) is at least  
\begin{align*}
- \frac{w_{r_L}}{a_{r_L}}\frac{a_i}{n}
&- \frac{w_{r_L}}{a_{r_L}} (1-K^{-\gamma c})a_i
- K^{\gamma c}\sum_{j\in B-\{i\}: x_j[t]=1,y_j=1} w_j \\
&- \sum_{j\in B: x_j[t]=0,y_j=1} w_j
+ \sum_{j<r_B} w_j (x_j[t] - (x_j[t]\oplus y_j)). \end{align*}
Combining the summations,
this is
at least
$$-\frac{w_{r_L}}{a_{r_L}}\tfrac 1n a_i - (1- K^{-\gamma c}) \frac{w_{r_L}}{a_{r_L}} a_i - K^{\gamma c} \sum_{j \le \ell_B, j \neq i, y_j=1} w_j.$$
Upper-bounding $a_i$ with $a_{r_L}$, the
first two terms are at least
$ - (1 + \tfrac 1n - K^{-\gamma c}) w_{r_L} $.
Then upper-bounding $w_{r_L}$ with $K^{\gamma c} w_i$,
the whole thing is at least $\Psi(y)$.

We have shown that, for $y\in Y$,
$$\sum_{j \in [n]} 
w_j (x_j[t] - (x_j[t]\oplus y_j)) \geq \Psi(y).$$
Suppose that $y$ is an $n$-bit binary string such that, if $y[t+1]=y$, then 
$A_{t+1}$ does not occur.
In this case, we also have
$$\sum_{j \in [n]} 
w_j (x_j[t] - (x_j[t]\oplus y_j)) = 0 \geq \Psi(y),$$
since $\Psi(y)<0$.

Now let $y[t+1]$ be random as constructed by the algorithm. Denote by $y^*$ the bit~string that contains exactly one one-entry, namely the one on position $i$. Let $P$ be the probability (conditional on $I_i[t+1]$) that $y[t+1] = y^*$. Now 
\begin{align*}
  E[\Phi_f&(x[t]) - \Phi_f(x[t+1])\mid I_i[t+1]] \\
               &= \sum_{y\in Y}   \Pr(y[t+1]=y \mid    I_i[t+1]) \sum_{j \in [n]} w_j (x_j[t] - (x_j[t]\oplus y_j))\\
&= \sum_{y\in Y} \Pr(y[t+1]=y \mid    I_i[t+1]) \Psi(y) +\\
&  \quad\quad\quad   \sum_{y\in Y} \Pr(y[t+1]=y \mid    I_i[t+1])\left(\sum_{j \in [n]} 
w_j (x_j[t] - (x_j[t]\oplus y_j))- \Psi(y) \right)\\
&=  E[\Psi(y[t+1]) \mid I_i[t+1]] +
\sum_{y\in Y} \Pr(y[t+1]=y \mid    I_i[t+1])
\left(\sum_{j \in [n]} 
w_j (x_j[t] - (x_j[t]\oplus y_j))
- \Psi(y) \right)
\\
   &\geq E[\Psi(y[t+1]) \mid I_i[t+1]] + P(-\Psi(y^*)+\Phi_f(x[t])-\Phi_f(x[t]\oplus y^*)) \\
  &\geq - (1 + \tfrac 1n - K^{-\gamma c}) K^{\gamma c}w_i - K^{\gamma c} \sum_{j \le \ell_B, j \neq i} \tfrac{c}{n} w_j + P((1 + \tfrac 1n - K^{-\gamma c}) K^{\gamma c}w_i + w_i),
\end{align*}
where the first inequality comes by ignoring terms $y \in Y-\{y^*\}$ (since these are non-negative).

Consider the first term,
$$- (1 + \tfrac 1n - K^{-\gamma c}) K^{\gamma c}w_i
=
- (K^{\gamma c} -1 + \tfrac {K^{\gamma c}}n  )  w_i.$$
By our choice of $\gamma$, $K^{\gamma c}-1$ is very small (see the discussion at the end of Case~1). 
Since $n\geq n_0$, 
$$K^{\gamma c} -1 + \tfrac {K^{\gamma c}}n   \leq (\eps/3) (1-\tfrac{c}{n})^n.$$
Now by the definitions of $I_i[t+1]$ and $y^*$,
$P =(1-p_n)^{n-\zeta-1}$, where $\zeta$ is the number of bits $j>i$ such that $x_j[t]=1$.
Thus, $P\geq {(1-p_n)}^{n}$ so 
$$(\eps/3) (1-\tfrac{c}{n})^n
\leq (\eps/3) P.$$ 
We conclude that
the first term 
 is at least  
$-(\eps/3) P w_i$.
Using Lemma~\ref{lwsumrightend} and $w_{\ell_B} \le K^{\gamma c} w_i$, we obtain  
$$K^{\gamma c} \tfrac{c}{n}\sum_{j \le \ell_B} w_j \le w_i 
\left(
\frac{K^{2\gamma c}}{ \ln K} + 
\frac{c K^{2\gamma c}}{n} + K^{2\gamma c} \gamma c  
+ \frac{c K^{2\gamma c}}{n^{4}}\right).$$
Given the constraints on our parameters (see the discussion at the end of Case~1), each of the  
four summands,  
$\frac{K^{2\gamma c}}{ \ln K}$,
$\frac{c K^{2\gamma c}}{n}$,  $K^{2\gamma c} \gamma c  $ and
$\frac{c K^{2\gamma c}}{n^{4}}$, is at most $(\eps/12) P  $. Thus, the
second term, 
$  - K^{\gamma c} \sum_{j \le \ell_B, j \neq i} \tfrac{c}{n} w_j$,
is also at least $-(\eps/3) P w_i$.
In a similar way, we see that the third term, 
$$P((1 + \tfrac 1n - K^{-\gamma c}) K^{\gamma c}w_i + w_i),$$
is at least $P w_i(1-\eps/3)$. We conclude that 
$$E[\Phi_f(x[t]) - \Phi_f(x[t+1]) \mid I_i[t+1]] \ge P w_i (1 - \eps),$$  
which establishes Equation~(\ref{eq:caseeq}), as required.

\subsection{Case 5}
\label{sec:caselast}

For this case, assume that $B$ is a long block.

To the right of $B$, there might be a short block $R$, otherwise 
$r_B=1$ and
we define $R  = \{r_B\}$ to ease notation. To the left of $B$, there might be a short block $L$, otherwise 
$\ell_B=n$ and
we define $L = \{\ell_B\}$ to ease notation.

Let $Y$ be the set of $n$-bit binary strings so that, if $y[t+1]=y$, then
$I_i[t+1]$ occurs and
$A_{t+1}$ occurs (so the move in step~$t+1$ is accepted).
As in Case~4, $A_{t+1}$ implies $I'_{\ell_L}[t+1]$. Hence 
for every $y\in Y$ we have
   $y_j=0$ for $j>\ell_L$ and 
   for all $j>i$ satisfying $x_j[t]=1$.
Thus, if $y\in Y$, then, by the definition of $A_{t+1}$, we have
\begin{align}
\label{myeq1case5}
0 &\leq \sum_{j\leq \ell_L} a_j (x_j[t] - (x_j[t]\oplus y_j)) \nonumber\\
&\leq \sum_{r_R \le j\leq \ell_L} a_j (x_j[t] - (x_j[t]\oplus y_j)) + a_i n^{-3}\nonumber\\
&\leq \sum_{r_B \le j\leq \ell_L} a_j (x_j[t] - (x_j[t]\oplus y_j)) + \sum_{j \in R; y_j = 1; x_j[t]=1} a_j + a_i n^{-3}.
\end{align}

We will use the fact that for $j \in L \cup B$, we 
have $w_j = \frac{w_{r_B}}{a_{r_B}} a_j$ since we are in the copy regime, whereas for $j \in R$, we 
are in the damped regime, so we
have 
$$a_j \le a_{r_B} = \frac{a_{r_B}}{w_{r_B}} w_{r_B} \leq  
\frac{a_{r_B}}{w_{r_B}} w_{r_R} K^{(r_B - r_R)c/n} \le  
\frac{a_{r_B}}{w_{r_B}} w_j K^{\gamma c}.$$  

Plugging this into (\ref{myeq1case5}), we obtain 
\begin{align*}
\sum_{r_B \le j\leq \ell_L}& w_j (x_j[t] - (x_j[t]\oplus y_j)) = 
\frac{w_{r_B}}{a_{r_B}} \sum_{r_B \le j\leq \ell_L} a_j (x_j[t] - (x_j[t]\oplus y_j)) \\ 
\ge &- \frac{w_{r_B}}{a_{r_B}}\left(\sum_{j \in R; y_j = 1; x_j[t]=1} a_j + a_i n^{-3}\right) 
\ge - K^{\gamma c} \sum_{j \in R; y_j = 1; x_j[t]=1} w_j - w_i n^{-3}.
\end{align*}

Let $\Psi(y) = -K^{\gamma c} \sum\limits_{j \le \ell_R; y_j = 1} w_j - w_i n^{-3}$.
From the above,
\begin{align*}
\sum_{j\leq \ell_L} &w_j (x_j[t] - (x_j[t]\oplus y_j)) \\
&\ge (1-K^{\gamma c}) \sum_{j \in R; y_j = 1; x_j[t]=1} w_j - \sum_{j \in R; y_j = 1; x_j[t]=0} w_j - w_i n^{-3} 
- \sum_{j < r_R; y_j = 1} w_j\\
&\ge   \Psi(y).
\end{align*}

We have shown that, if $y\in Y$ (so $I'_{\ell_L}[t+1]$ occurs), then
$$\sum_{j\in[n]} w_j (x_j[t] - (x_j[t]\oplus y_j)) \geq \Psi(y).$$

Suppose now that $y$ is an $n$-bit binary string such that, if $y[t+1]=y$, then 
$A_{t+1}$ does not occur.
In this case, we also have
$$\sum_{j \in [n]} 
w_j (x_j[t] - (x_j[t]\oplus y_j)) = 0 \geq \Psi(y),$$
since $\Psi(y)<0$.
 
Now let $y[t+1]$ be random as constructed by the algorithm. Denote by $y^*$ the bit~string that contains exactly one ``1''-entry, namely on position $i$. Let $P$ be the probability (conditional on $I_i[t+1]$) that $y[t+1] = y^*$. Now, as in Case~4,
\begin{align*}
  E[\Phi&(x[t]) - \Phi_f(x[t+1]) \mid I_i[t+1]] \\
  &\ge E[\Psi(y[t+1]) \mid I_i[t+1]] + P(-\Psi(y^*)+\Phi_f(x[t])-\Phi_f(x[t]\oplus y^*)) \\
  &\geq  - K^{\gamma c} \sum_{j \le \ell_R} \tfrac{c}{n} w_j - n^{-3}w_i + P w_i. 
\end{align*}

Using Lemma~\ref{lwsumrightend} and $w_{\ell_R} \le w_i$, we obtain  
$$K^{\gamma c} \frac{c}{n}\sum_{j \le \ell_R} w_j 
\le w_i \left(
\frac{K^{\gamma c}}{ \ln K} + 
\frac{c K^{\gamma c} }{n} + K^{\gamma c}\gamma c  
+ \frac{c K^{\gamma c}}{n^2}\right).$$ 
Since each of the summands, 
$\frac{K^{\gamma c}}{ \ln K}$, $
\frac{c K^{\gamma c} }{n}$,  $K^{\gamma c}\gamma c  $, 
$\frac{c K^{\gamma c}}{n^2}$ 
and $n^{-3}$
is at most $(\eps/5)P$ (for $n\geq n_0$),
we have  $E[\Phi_f(x[t]) - \Phi_f(x[t+1]) \mid I_i[t+1]] \ge P w_i (1 - \eps)$, which gives 
Equation~(\ref{eq:caseeq}), as required.

The cases that we have just completed conclude the proof of Lemma~\ref{lem:drift},
which was used in the proof of Lemma~\ref{lem:gotfeasible}.
We are now ready to prove Theorem~\ref{thm:main}.
\begin{proof}[Proof of Theorem~\ref{thm:main}]
By Lemma~\ref{lem:gotfeasible}, there is a function $\nu(n) = O(n)$ such that
the family $\Phi = \{\Phi_f\}$ of drift functions
that we have constructed   is $\nu$-feasible for the EA.  
By Lemma~\ref{lem:gotpiecewise} this family of drift
functions is piece-wise polynomial with respect to the EA.
The result now follows from Lemma~\ref{lem:piecewise} (using Definition~\ref{usedef}).
\end{proof}
  
\section{A Simple Lower Bound}\label{sec:lower}
 
The following theorem complements Theorem~\ref{thm:main}, showing that it cannot be improved by more than a constant factor. This extends Lemma 10 in~\cite{DJW02} using the same proof idea.

\begin{theorem}
\label{thm:lower}
Let~$c$ be a positive constant.
Let $\tilde c = \max\{1,c\}$. 
Let~$F$ be a family of linear objective functions over bit strings.
Consider the (1+1) EA for minimising $F$ with independent bit-mutation rate
$p_n = c/n$. 
There is a constant~$n_0$ such that, for any $f\in F$ with $n(f)\geq n_0$,
 the probability that  the optimisation time
is at most  $ n(f) \ln(n(f)) / 
( {2 (\tilde c + 1)}  )$  is at most $\exp(-{n(f)}^{\Omega(1)})$.
\end{theorem}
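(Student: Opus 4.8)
The plan is to use the classical coupon-collector-style lower bound argument (this is the same idea as Lemma~10 of~\cite{DJW02}). Since $F$ is linear with all coefficients positive, the unique optimum of each $f\in F$ is the all-zero string $\mathbf{0}$; write $n=n(f)$ and set $T=\lfloor n\ln n/(2(\tilde c+1))\rfloor$. The central observation is a simple \emph{necessary condition} for fast optimisation: if the optimisation time is at most~$T$, then for every bit position $i$ with $x_i[0]=1$ there must be some step $t\in\{1,\dots,T\}$ in which bit~$i$ is flipped (i.e.\ $y_i[t]=1$). Indeed, if bit~$i$ starts at~$1$ and its mutation bit is~$0$ in every one of the first~$T$ steps, then $x_i[t]=1$ for all $t\le T$, so no $x[t]$ with $t\le T$ equals $\mathbf{0}$.

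First I would control the initial string. Let $S=\{i:x_i[0]=1\}$. Since $x[0]$ is uniform, $|S|$ is binomially distributed with mean $n/2$, so by a Chernoff bound $\Pr(|S|<n/4)\le\exp(-\Omega(n))$. Conditioned on~$S$, for $i\in S$ let $E_i$ be the event that $y_i[t]=0$ for all $t\in\{1,\dots,T\}$. The events $\{E_i:i\in S\}$ involve pairwise disjoint coordinates of the (mutually independent) mutation masks, hence are independent, each with $\Pr(E_i)=(1-p_n)^T$, independently of~$S$. By the necessary condition above, the optimisation time exceeds~$T$ whenever some $E_i$ with $i\in S$ occurs, so
\[
\Pr(\text{opt.\ time}\le T)\ \le\ \Pr(|S|<n/4)\ +\ \bigl(1-(1-p_n)^T\bigr)^{n/4}.
\]

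Next I would estimate $(1-p_n)^T$ from below. Using $1-x\ge e^{-x-x^2}$ for $x\le 1/2$ with $x=c/n$ gives $(1-p_n)^T\ge\exp(-cT/n-c^2T/n^2)$. Here $cT/n\le c\ln n/(2(\tilde c+1))=\beta\ln n$ with $\beta:=c/(2(\tilde c+1))$, and $c^2T/n^2=o(1)$, so for $n\ge n_0$ we have $(1-p_n)^T\ge\tfrac12 n^{-\beta}$. Crucially $\beta<1/2$: if $c\le 1$ then $\tilde c=1$ and $\beta=c/4\le 1/4$, while if $c>1$ then $\tilde c=c$ and $\beta=c/(2(c+1))<1/2$. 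Plugging in,
\[
\bigl(1-(1-p_n)^T\bigr)^{n/4}\ \le\ \bigl(1-\tfrac12 n^{-\beta}\bigr)^{n/4}\ \le\ \exp\!\bigl(-\tfrac18 n^{1-\beta}\bigr),
\]
and since $1-\beta>1/2$ this is $\exp(-n^{\Omega(1)})$, which dominates the $\exp(-\Omega(n))$ term. Combining gives $\Pr(\text{opt.\ time}\le T)\le\exp(-n^{\Omega(1)})$, as claimed.

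There is no serious obstacle here; the argument is elementary, which is why the introduction can dismiss it as ``standard''. The only points requiring (routine) care are: (i) making the bound $(1-p_n)^T\ge\tfrac12 n^{-\beta}$ rigorous and specifying how large $n_0$ must be (depending on~$c$); and (ii) checking that the exponent $\beta$ stays bounded away from~$1$ for every fixed~$c$, which is where the particular constant $2(\tilde c+1)$ in the statement is convenient. Everything else reduces to a Chernoff bound together with the trivial mutual independence of disjoint mutation coordinates.
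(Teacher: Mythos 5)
Your proposal is correct and is essentially the same argument as the paper's: a Chernoff bound to ensure $\Omega(n)$ initially-one bits, followed by the observation that each such bit must be touched by some mutation mask within $T$ steps, with the touching events independent across bits and each untouched with probability $(1-p_n)^T \ge n^{-\beta}$ for some $\beta<1$. The only differences are cosmetic (you use $n/4$ and the bound $1-x\ge e^{-x-x^2}$ where the paper uses $n/3$ and a choice of $n_0$ making $(1-\tilde c/n)^n\ge e^{-(\tilde c+1)}$, yielding exponent exactly $1/2$).
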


\begin{proof}

Let $n_0$ be any integer so that
$(1-\tfrac{\tilde c}{n_0})^{n_0} \geq \exp(-(\tilde c+1))$.
It is easy to see that such an~$n_0$ exists,
since $(1-\tfrac{\tilde c}{n_0})^{n_0}$
converges, from below, to $\exp(-\tilde c)$, as $n\rightarrow \infty$.
Consider an input $f\in F$ 
with $n(f)\geq n_0$. Let $n=n(f)$ and let
$$T = \frac{1}{2 (\tilde c + 1)} n \ln n.$$
 
The probability that a particular bit position 
is not touched by any mutation step
during $T$ iterations is at least 
$$(1-p_n)^T \geq (1-\tilde c/n)^T \ge \exp(-(\tilde c+1) T/n)
= n^{-1/2}.$$

By a Chernoff bound, the probability that the
initial solution~$x$ (which is chosen uniformly at random from $\{0,1\}^n$) has
at least $n/3$ bit positions that are one is at least
$1 - \exp(- n/36)$.  
The probability that all of these bits 
are touched in $T$ mutation steps is at most 
$(1 - n^{-1/2})^{n/3} \le \exp(-(1/3) n^{1/2})$. 

Thus, the probability that the optimum is found in~$T$ steps is at most
$\exp(- n/36) + \exp(-(1/3) n^{1/2})$. \end{proof}

 \section{Conclusion}

Let~$c$ be a positive constant.
Let~$F$ be a family of linear objective functions over bit strings.
Theorem~\ref{thm:main} shows that the (1+1) EA for minimising $F$ with independent bit-mutation rate
$p_n = c/n$ 
has expected optimisation time $O(n(f) \log n(f))$.
The proof of the theorem constructs a feasible family of drift functions for the EA that is piece-wise polynomial.
The construction of the drift functions depends on the relevant objective functions.
By reproving a classical drift theorem, we  
also show that our bound on the expected optimisation time
 also holds with high probability. 
 This version of the drift theorem makes it easy to  extend a number of other classical bounds stemming from drift or ``expected multiplicative weight decrease'' 
arguments to also hold with high probability, instead of only with expectation (see~\cite{PPSNtail}). We expect this version of the drift theorem to become a useful tool in the theory of evolutionary algorithms.
 
\subsection*{Acknowledgements}

The authors would like to thank Daniel Johannsen for several useful comments.

\bibliographystyle{abbrv}
\bibliography{thispaper}

\begin{thebibliography}{10}

\bibitem{Baeck93}
T.~B{\"a}ck.
\newblock Optimal mutation rates in genetic search.
\newblock In S.~Forrest, editor, {\em International Conference on Genetic
  Algorithms (ICGA)}, pages 2--8. Morgan Kaufmann, 1993.

\bibitem{BaswanaBDFKN2009}
S.~Baswana, S.~Biswas, B.~Doerr, T.~Friedrich, P.~P. Kurur, and F.~Neumann.
\newblock Computing single source shortest paths using single-objective
  fitness.
\newblock In {\em FOGA '09: Proceedings of the Tenth ACM SIGEVO Workshop on
  Foundations of Genetic Algorithms}, pages 59--66. ACM, 2009.

\bibitem{PPSNthis}
B.~Doerr and L.~Goldberg.
\newblock Adaptive drift analysis.
\newblock In R.~Schaefer, C.~Cotta, J.~Kolodziej, and G.~Rudolph, editors, {\em
  Parallel Problem Solving from Nature --- PPSN XI}, volume 6238 of {\em
  Lecture Notes in Computer Science}, pages 32--41. Springer, 2011.

\bibitem{PPSNtail}
B.~Doerr and L.~Goldberg.
\newblock Drift analysis with tail bounds.
\newblock In R.~Schaefer, C.~Cotta, J.~Kolodziej, and G.~Rudolph, editors, {\em
  Parallel Problem Solving from Nature --- PPSN XI}, volume 6238 of {\em
  Lecture Notes in Computer Science}, pages 174--183. Springer, 2011.

\bibitem{DoerrPPSN10}
B.~Doerr, T.~Jansen, D.~Sudholt, C.~Winzen, and C.~Zarges.
\newblock Optimizing monotone functions can be difficult.
\newblock In R.~Schaefer, C.~Cotta, J.~Kolodziej, and G.~Rudolph, editors, {\em
  Parallel Problem Solving from Nature --- PPSN XI}, volume 6238 of {\em
  Lecture Notes in Computer Science}, pages 42--51. Springer, 2010.

\bibitem{DoerrJW10cec}
B.~Doerr, D.~Johannsen, and C.~Winzen.
\newblock Drift analysis and linear functions revisited.
\newblock In {\em Congress on Evolutionary Computation (CEC-2010)}, pages 1--8.
  IEEE, 2010.

\bibitem{DoerrJW10}
B.~Doerr, D.~Johannsen, and C.~Winzen.
\newblock Multiplicative drift analysis.
\newblock In {\em Genetic and Evolutionary Computation Conference
  (GECCO-2010)}, pages 1449--1456. ACM, 2010.

\bibitem{DJW98}
S.~Droste, T.~Jansen, and I.~Wegener.
\newblock A rigorous complexity analysis of the (1 + 1) evolutionary algorithm
  for separable functions with boolean inputs.
\newblock {\em Evolutionary Computation}, 6(2):185--196, 1998.

\bibitem{DJW02}
S.~Droste, T.~Jansen, and I.~Wegener.
\newblock On the analysis of the (1+1) evolutionary algorithm.
\newblock {\em Theor. Comput. Sci.}, 276(1-2):51--81, 2002.

\bibitem{DyerGreenhill}
M.~Dyer and C.~Greenhill.
\newblock Random walks on combinatorial objects.
\newblock In {\em Surveys in Combinatorics 1999}, pages 101--136. University
  Press, 1999.

\bibitem{GielL06}
O.~Giel and P.~K. Lehre.
\newblock On the effect of populations in evolutionary multi-objective
  optimization.
\newblock In {\em GECCO '06: Proceedings of the 8th annual conference on
  Genetic and evolutionary computation}, pages 651--658. ACM, 2006.

\bibitem{GielWegenerSTACS03}
O.~Giel and I.~Wegener.
\newblock Evolutionary algorithms and the maximum matching problem.
\newblock In {\em STACS 2003, 20th Annual Symposium on Theoretical Aspects of
  Computer Science}, pages 415--426. Springer, 2003.

\bibitem{GS}
G.~R. Grimmett and D.~R. Stirzaker.
\newblock {\em Probability and Random Processes}.
\newblock Oxford University Press, New York, second edition, 1992.

\bibitem{Hajek82}
B.~Hajek.
\newblock Hitting-time and occupation-time bounds implied by drift analysis
  with applications.
\newblock {\em Advances in Applied Probability}, 13(3):502--525, 1982.

\bibitem{HappJKN08}
E.~Happ, D.~Johannsen, C.~Klein, and F.~Neumann.
\newblock Rigorous analyses of fitness-proportional selection for optimizing
  linear functions.
\newblock In C.~Ryan and M.~Keijzer, editors, {\em Genetic and Evolutionary
  Computation Conference (GECCO-2008)}, pages 953--960. ACM, 2008.

\bibitem{HeY01}
J.~He and X.~Yao.
\newblock Drift analysis and average time complexity of evolutionary
  algorithms.
\newblock {\em Artificial Intelligence}, 127(1):57--85, 2001.

\bibitem{HeY02}
J.~He and X.~Yao.
\newblock Erratum to: Drift analysis and average time complexity of
  evolutionary algorithms [{A}rtificial {I}ntelligence 127(1):57--85, 2001].
\newblock {\em Artificial Intelligence}, 140(1/2):245--248, 2002.

\bibitem{HeY04}
J.~He and X.~Yao.
\newblock A study of drift analysis for estimating computation time of
  evolutionary algorithms.
\newblock {\em Natural Computing}, 3(1):21--35, 2004.

\bibitem{Jagerskupper11}
J.~J{\"a}gersk{\"u}pper.
\newblock Combining {M}arkov-chain analysis and drift analysis --- the (1+1)
  evolutionary algorithm on linear functions reloaded.
\newblock {\em Algorithmica}, 59(3):409--424, 2011.

\bibitem{NeumannOW09}
F.~Neumann, P.~S. Oliveto, and C.~Witt.
\newblock Theoretical analysis of fitness-proportional selection: landscapes
  and efficiency.
\newblock In F.~Rothlauf, editor, {\em Genetic and Evolutionary Computation
  Conference (GECCO-2009)}, pages 835--842. ACM, 2009.

\bibitem{NeumannW07}
F.~Neumann and I.~Wegener.
\newblock Randomized local search, evolutionary algorithms, and the minimum
  spanning tree problem.
\newblock {\em Theoretical Compututer Science}, 378(1):32--40, 2007.

\bibitem{OlivetoW11}
P.~S. Oliveto and C.~Witt.
\newblock Simplified drift analysis for proving lower bounds in evolutionary
  computation.
\newblock {\em Algorithmica}, 59(3):369--386, 2011.

\bibitem{Wegener02}
I.~Wegener.
\newblock Methods for the analysis of evolutionary algorithms on
  pseudo-{B}oolean functions.
\newblock In R.~Sarker, X.~Yao, and M.~Mohammadian, editors, {\em Evolutionary
  Optimization}, pages 349--369. Kluwer, 2002.

\end{thebibliography}
 
\end{document}